\DeclareRobustCommand*\cal{\@fontswitch\relax\mathcal}
\newcommand{\mye}{{mye}}
\newcommand{\prob}{\rho}
\newcommand{\mpr}{p^*}
\newcommand{\R}{\mathbb{R}}
\newcommand{\Ex}{\text{E}}
\newcommand{\Val}{V}
\DeclarePairedDelimiterX{\ExpArg}[1]{[}{]}{#1}
\newcommand{\trans}{Q}
\newcommand{\alloc}{x}
\newcommand{\price}{p}
\newcommand{\bid}{b}
\newcommand{\avgbid}{\overline{b}}
\newcommand{\val}{v}
\newcommand{\util}{u}
\newcommand{\newstate}{s}
\newcommand{\newstatespace}{\mathcal{S}}
\newcommand{\mech}{M}
\newcommand{\mthree}{\mech(\epsilon,\prob,\price)}
\newcommand{\const}{{R}}
\newcommand{\mthreemye}{\mech(\epsilon,\prob,\mpr, \const)}
\newcommand{\rev}{\mathrm{Rev}}
\newcommand{\revmye}{\rev^{\mathrm{Mye}}}
\newcommand{\revmthreemyek}[1][k]{\rev_{#1}^{\mthreemye}}
\newcommand{\simp}{{\mathbf{\Delta}}}
\newcommand{\sacomment}[1]{}
\newcommand{\delete}[1]{}
\newcommand{\indi}{\mathbbm{1}}
\newcommand{\isborder}{=\pm}
\newtheorem{theorem}{Theorem}[section]
\newtheorem{definition}{Definition}
\newtheorem{claim}{Claim}
\newtheorem{lemma}[theorem]{Lemma}
\newtheorem{proposition}[theorem]{Proposition}
\newcommand{\qed}{\mbox{\ \ \ }\rule{6pt}{7pt} \bigskip}
\newcommand{\comment}[1]{}
\newenvironment{proof}{\noindent{\em Proof:}}{\hfill\qed}
\newenvironment{proofsketch}{\noindent{\em Proof Sketch:}}{\qed}
\newenvironment{prevproof}[2]{\noindent {\sc {Proof of 
{#1}~\ref{#2}:}}}{$\hfill\qed$\vskip \belowdisplayskip}
\newcommand{\shouldWeRemove}[1]{}
\newcommand{\new}[1]{#1} 
\newcommand{\removeEC}[1]{}
\newenvironment{oneshot}[1]{\@begintheorem{#1}{\unskip}}{\@endtheorem}
\begin{document}
\title{Robust Repeated Auctions under Heterogeneous Buyer Behavior}
\author{
Shipra Agrawal\thanks{IEOR Department, Columbia University. 
\tt{sa3305@columbia.edu}.}
\and Constantinos Daskalakis\thanks{EECS, Massachusetts Institute of 
Technology. \tt{costis@csail.mit.edu}.}
\and Vahab Mirrokni\thanks{Google Research, New York . 
\tt{mirrokni@google.com}.}
\and Balasubramanian Sivan \thanks{Google Research, New 
York. \tt{balusivan@google.com}.}
}

\date{}
\maketitle{}
\thispagestyle{empty}

\newcommand{\expostLB}{1.6}
\begin{abstract}
	We study revenue optimization in a repeated auction between a single seller 
	and	a single buyer. Traditionally, the design of  repeated auctions requires strong
	modeling assumptions about the bidder behavior, such as it being myopic, infinite lookahead, or some specific form of learning behavior. 	Is it possible to design mechanisms which are simultaneously optimal against a multitude of possible buyer behaviors? We answer this question by designing a simple state-based mechanism that is 
	simultaneously approximately optimal against a $k$-lookahead buyer for all $k$, a 
	buyer who is a no-regret learner, and a buyer who is a policy-regret 
	learner. Against each type of buyer our mechanism attains a constant 
	fraction of the optimal revenue attainable against that type of buyer. We complement our positive 
	results with almost tight impossibility 
	results, showing that the revenue approximation
	tradeoffs achieved by our mechanism for different lookahead attitudes are near-optimal.

\end{abstract}

\newpage
\setcounter{page}{1}

\section{Introduction}
\label{sec:intro}
Developing a theory of repeated
auctions that outlines the boundaries for what is and is not possible
is of both scientific and commercial significance. 
On the application side, it is partly motivated by online sale of display ads in ad exchanges via repeated auctions.
An essential difference that 
sets apart the repeated/dynamic setting from its
one-shot counterpart is the significantly higher revenue that is achievable in
the former. The key reason for this difference is simply that bundling may increase
revenue, and repeated interactions provide ample opportunities to bundle across
time. 

In its gross form, a dynamic mechanism that bundles across time could
simply demand the buyer to pay her entire surplus for $T$ future rounds, save some small $\epsilon$,  upfront for the promise of getting the item for free in all future rounds. 
A risk-neutral buyer  would have no choice but to accept this offer to get expected
 utility of $\epsilon$ 
or else get $0$ utility. Such mechanisms that threaten buyers to get either tiny or
$0$ utility have several drawbacks, the most prominent being that they force the
buyer to make a huge payment upfront which is unappealing. This motivated a 
string of recent work \cite{ADH16,MPTZ16a,BMP16,MPTZ17} proposing mechanisms 
that satisfy {\em per round ex-post individual 
rationality}, i.e. that the buyer's utility is non-negative in every round 
under his optimal strategy, rather than {\em interim individual rationality (IIR),} 
which only requires that the buyer's long-term expected utility is 
non-negative. 

Nevertheless, the ability of these dynamic mechanisms to extract high revenue 
depends crucially on several non-trivial assumptions:
\begin{enumerate}
\item The buyer completely understands the seller's mechanism. In particular, 
he understands, and can optimally respond to the consequences of his actions 
today on his utility $k$ rounds later, for all $k$; \label{ass0}
\item The buyer believes that the interaction with the seller will last for all future rounds; \label{ass1}
\item The buyer believes that the seller will stick to her proposed mechanism for all future rounds. \label{ass2}
\end{enumerate} 
In particular, the notion of `infinite look-ahead buyers' which is baked into the widely used concept of dynamic incentive-compatibility, requires that the buyer's action in every round takes into account the consequences of his action on his 
utility in \emph{all future rounds}, thus relying on all of the above assumptions. There are important practical reasons invalidating these assumptions. 
Firstly, the buyer may not be fully informed about and/or trust all the details of the seller's mechanism. Furthermore, cognitive/computational limitations or uncertainty about the future may prevent buyers from being infinite lookahead. 
In the context of online 
advertising, for example, given the number 
and variety of display ad exchanges in
the market, with credibility levels all across the spectrum, the buyers often 
don't trust that the seller will faithfully implement the announced mechanism 
\cite{ExchangeTransparency}. 

As a result, the seller often faces a buyer population that employs a variety of strategies, beyond perfectly rational infinite lookahead utility maximization, in order to maximize their {\it perceived} utility. Such a buyer could 
\begin{enumerate}
\item be myopic or more generally, have a limited lookahead, i.e., evaluate his decisions today only 
	based on their effect on the utility of $k$ future rounds (a $k$-lookahead buyer).
\item be a learner, i.e., completely disregard the seller's description of 
	the mechanism, and instead make his decisions through his favorite  
	learning algorithm using only his observed feedback so far.
\end{enumerate}
In face of such heterogeneous behaviors, the revenue-optimal solution for the seller is to 
have a tailored mechanism for each buyer behavior. 
However, there are strong reasons precluding the implementation of different mechanisms, each targeting a specific buyer behavior. Such discriminative targeting may be legally infeasible, and it may also be practically infeasible, as it could 
be hard for the seller to identify a buyer's response behavior. The latter may 
not even be well-defined, as buyers may change their response strategy across 
time. These observations motivates us to we ask the following question. 
\begin{center}
{\em Can we design mechanisms which are robust against heterogeneous buyer behaviors? }
\end{center}
Specifically, we seek a {\it single} mechanism that gets approximately optimal revenue 
simultaneously against buyers with different lookahead and learning behaviors, 
i.e., against each type of buyer, obtains a constant  fraction of the optimal 
revenue achievable by mechanisms tailored to that specific type of buyer. 

{\paragraph{Our setting and main results.}
We study a repeated interaction between a single seller and a single buyer over multiple rounds. At the beginning of 
each round $t = 1,2,\ldots, T$, there is a 
single fresh good for sale whose private value $\val_t \in \Val$ 
for the buyer is drawn from a publicly known distribution\footnote{As we 
explain later, for our positive results, the seller only needs to know the mean 
$\mu$ and not the whole 
distribution $F$.} $F$ with finite expectation $\mu$. 
The buyer observes the valuation $\val_t$ and makes a bid $b_t$. The 
good for sale in round $t$ has to be either allocated to the buyer or discarded 
immediately (i.e., not 
carried forward). The buyer's valuations are additive across rounds.
}

{
Our goal is to investigate the sensitivity of revenue extraction to variations 
in both the `lookahead attitude' and the `learning behavior' of the buyer. To this end, we introduce a novel framework for characterizing  revenue tradeoffs of dynamic mechanisms under heterogeneous buyer behavior. We formalize variations in forward-planning attitude (or lack thereof) of the buyer by considering a range of lookahead levels: from myopic to limited $k\ge 1$ lookahead to infinite lookahead. We model learning behavior of the buyer using the popular concept of no-regret learning, and allow different levels of learning sophistication by considering buyers who minimize simple regret vs. policy regret. Formal definitions of these different buyer behavior are provided in Section \ref{sec:prelim}. We characterize a mechanism as robust if it simultaneously achieves near-optimal revenue for different buyer behaviors. 
\begin{definition}
We call a mechanism {\bf $(\alpha, \beta)$-robust against lookahead attitudes} if, for any per-round valuation distribution $F$, it simultaneously achieves an expected average (over $T$ rounds) revenue of at least
\begin{itemize}
\item $\alpha \cdot \revmye - o(1)$ against  any myopic buyer, and  
\item $\beta \cdot \mu - o(1)$ against any $k$-lookahead buyer with $k\ge 1$, 
\end{itemize}
where $\revmye = \max_x x(1-F(x))$, $\mu = \Ex_{x \sim F}[x]$, and $o(1)$ are functions that go to $0$ with $T$.
\end{definition}
\begin{definition}
We call a mechanism {\bf $(\alpha, \beta)$-robust against learning behaviors} if, for any per-round valuation distribution $F$, it simultaneously achieves an expected average (over $T$ rounds) revenue of at least
\begin{itemize}
\item $\alpha \cdot \revmye - o(1)$ against  any no-simple-regret learner, and  
\item $\beta \cdot \mu - o(1)$ against any no-policy-regret learner, 
\end{itemize}
where Myerson revenue $\revmye$ and mean $\mu$ are defined as in the previous definition, and $o(1)$ are functions that go to $0$ with $T$. 
\end{definition}
 
Our main contribution is a simple mechanism that can be tuned robust against different lookahead attitudes, as well as different learning behaviors. Our base mechanism will additionally satisfy interim individual rationality (IIR) and non-payment forcefulness, as defined in Section~\ref{sec:mechProp}. We will subsequently modify our base mechanism to improve its individual rationality properties.
\begin{theorem}
\label{th:1} 
For any $\epsilon\in (0,1)$, there exists a non-payment forceful and IIR mechanism that is $(\frac{\epsilon}{2}, 1-\epsilon)$-robust against {\it lookahead attitudes}.
\end{theorem}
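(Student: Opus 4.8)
The plan is to construct a single state-based mechanism with two "modes" that the buyer's own behavior selects between: a Myerson-reserve mode that is triggered when the buyer behaves myopically, and a bundling mode that extracts nearly the full welfare $\mu$ when the buyer has any positive lookahead. The key design idea is to offer, in each round, a menu consisting of (i) a "low" action: buy today's item at the Myerson reserve price $r^* = \argmax_x x(1-F(x))$ (so participating myopically yields revenue $\revmye$ per round), and (ii) a "high" action: make a small up-front-style commitment that moves the mechanism into a state where future items are sold very cheaply (essentially for free up to an $\epsilon$ slack), in exchange for a payment that captures a $(1-\epsilon)$ fraction of the buyer's accumulated surplus. The mechanism is specified as a small finite-state automaton: a "neutral/Myerson" state, and a "bundled" state entered once the buyer takes the high action, with the transition and the prices chosen so that (a) a myopic buyer strictly prefers the low action every round, pinning revenue at $\approx \epsilon\cdot\revmye$ (the factor $\epsilon/2$ absorbing a tie-breaking/encouragement discount), and (b) a $k$-lookahead buyer with $k\ge 1$ finds the high action worthwhile because even a single extra round of (near-)free items at value $\sim\mu$ dominates the one-round Myerson gap, and once in the bundled state the buyer keeps paying the $(1-\epsilon)$-surplus charge. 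One must also build in the standard "threat of termination / reversion" so that a lookahead buyer who accepts the bundle cannot then renege: deviating returns the automaton to a punishing state, which a $k$-lookahead buyer evaluates as strictly worse over his $k$-round horizon.

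The steps, in order. First I would formally define the state space $\mathcal{S}$, the per-state allocation rule, price rule, and transition function $Q$, all parametrized by $\epsilon$ and $\prob$ (using the paper's $\mech(\epsilon,\prob,\price)$ family), and verify it is non-payment forceful and IIR essentially by inspection: the only payments are either at-or-below-value purchases or the $(1-\epsilon)$ fraction of realized surplus, so per-round and interim utility stay nonnegative and no payment is ever extracted without a corresponding allocation. Second, I would analyze the myopic buyer: since a myopic buyer optimizes only the current round, and the current-round menu is exactly "pay $r^*$ for the item or pay nothing and get nothing" (the high action gives no current-round benefit, only future-state benefit the myope ignores), he buys iff $\val_t \ge r^*$, giving expected per-round revenue $r^*(1-F(r^*)) = \revmye$; the stated $\frac{\epsilon}{2}$ rather than $1$ comes from the need to price the low action slightly below $r^*$ (by a factor tied to $\epsilon$) so that the high action is never weakly preferred — so revenue is at least $\frac{\epsilon}{2}\revmye - o(1)$ after accounting for the $o(1)$ boundary-rounds correction. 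Third, I would analyze the $k$-lookahead buyer for arbitrary $k\ge1$: show that in round $1$ (or the first round he is "active"), comparing the $k$-round-ahead value of the high action versus the low action, the high action yields roughly $\mu + (k-1)\cdot(\text{near-}\mu)$ minus a $(1-\epsilon)$-surplus payment, versus at most $k\cdot(\mu - \revmye)$-ish utility from staying in Myerson mode; for $T$ large the bundled branch wins, so the buyer enters the bundled state, and by the termination threat he remains there, yielding per-round revenue $\ge (1-\epsilon)\mu - o(1)$. Finally, I would collect the two bounds to conclude $(\frac{\epsilon}{2}, 1-\epsilon)$-robustness.

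The main obstacle, and the part that needs genuine care rather than bookkeeping, is the $k$-lookahead incentive analysis for \emph{all} $k$ simultaneously with a \emph{single} choice of mechanism parameters. The delicate point is that a buyer with very small lookahead (say $k=1$) sees only a one-round "window" into the future benefit of entering the bundle, so the up-front charge extracted upon entering the bundled state must be calibrated so that even one round of cheap future items justifies it — yet the same charge, viewed by a buyer who has \emph{already} entered the bundle, must still be re-incurred each round (or the continuation value must still dominate the deviation value) so that steady-state per-round revenue is $(1-\epsilon)\mu$, not just a one-time $\epsilon$. Making the "pay $(1-\epsilon)$ of your surplus each round, with reversion-to-Myerson as the punishment for underpaying" both (i) attractive enough for a $1$-lookahead buyer to opt in and (ii) self-enforcing round after round requires the punishment state's $k$-round continuation value to be below the on-path value by a margin uniform in $k$; I would handle this by making the punishment a hard reversion to the Myerson-reserve regime (where per-round utility is $\le \mu - \revmye < \mu$) for a long enough block, so that the comparison $k\cdot((1-\epsilon)\text{-surplus path}) $ vs. $k\cdot(\text{Myerson path})$ has the same sign for every $k\ge 1$. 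The myopic and IIR/non-payment-forceful parts I expect to be routine once the automaton is written down.
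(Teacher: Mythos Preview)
There is a genuine gap in your incentive analysis. Your punishment state is a full Myerson posted-price auction at reserve $r^*$, and you argue that a $k$-lookahead buyer will prefer the bundled path (utility $\approx \epsilon\mu$ per round) to reverting. But the buyer's per-round expected utility in the Myerson state is $\Ex[(v-r^*)^+]=\int_{r^*}^\infty (v-r^*)\,dF(v)$, and for general $F$ this quantity is \emph{not} dominated by $\epsilon\mu$; for many distributions it strictly exceeds it (and for heavy-tailed $F$ it can be arbitrarily large relative to $\revmye$). So ``reversion to Myerson'' is not a punishment from the buyer's side, and a $1$-lookahead buyer will often strictly prefer the low action today (collect $v_t-r^*$) followed by the Myerson state tomorrow, never entering your bundle. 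Relatedly, your account of where the $\tfrac{\epsilon}{2}$ comes from---``price the low action slightly below $r^*$''---is not the mechanism behind the tradeoff; the impossibility result in the paper (Theorem~\ref{thm:lowerbound}) shows that any mechanism extracting $(1-\epsilon)\mu$ from forward-looking buyers \emph{must} give the myopic buyer at most $O(\epsilon)\cdot\revmye$, so the loss is structural, not a tie-breaking discount.

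The paper's construction resolves exactly this tension by \emph{throttling} the off-path state rather than running a full Myerson auction there. The mechanism tracks the running average of accepted bids; the state is ``good'' when this average is at least $(1-\epsilon)\mu$ and ``bad'' otherwise. In a good state the item is always allocated at first price; in a bad state the item is allocated only with probability $\rho$ and only if the bid meets $p^*$. The throttling does two things at once: (i) it caps the buyer's expected utility in the bad state at $\rho\mu$, so with $\rho<\epsilon$ the bad state is a genuine punishment and a coupling argument (Lemma~\ref{lem:alwaysgood}) shows any $k$-lookahead optimal bid, $k\ge 1$, keeps the state good, yielding average revenue $\ge(1-\epsilon)\mu$; and (ii) it is precisely the throttling that cuts myopic revenue, because a myopic buyer bids $0$ in the good state, drops to bad, and then wins only with probability $\rho$ per round---yielding $\tfrac{\rho}{\rho+1}\revmye$. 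Setting $\rho=\tfrac{\epsilon}{2-\epsilon}$ gives the $(\tfrac{\epsilon}{2},\,1-\epsilon)$ pair. In short, the same parameter that makes the off-path state punishing for lookahead buyers is what degrades myopic revenue; your proposal separates these roles and therefore cannot deliver both guarantees simultaneously.
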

\begin{theorem} \label{th:2}
For any $\epsilon\in (0,1)$, there exists a non-payment forceful and IIR mechanism that is $(\frac{\epsilon}{1+\epsilon}, 1-\epsilon)$-robust against {\it learning behaviors}.
\end{theorem}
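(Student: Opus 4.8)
I would run the state-based mechanism behind Theorem~\ref{th:1} (the base mechanism of Section~\ref{sec:mechProp}), retuning its second parameter $\prob$ for the learning setting, and argue that a no-policy-regret learner plays the role of the $k\ge 1$ lookahead buyer while a no-simple-regret learner plays the role of the myopic buyer. Recall that this mechanism has a \emph{cooperation regime}, in which it posts a near-full-surplus offer and rewards sustained acceptance, leaving the buyer a per-round surplus of about $\epsilon\mu$ while collecting about $(1-\epsilon)\mu$; and a \emph{fallback regime}, triggered only after the buyer has declined the cooperation offer more than $o(T)$ times, in which it posts (an estimate of) the revenue-optimal reserve $\mpr$ --- this is the only component that looks beyond $\mu$, and when $F$ is unknown it is where the confidence width $\const\sqrt{\log T/n}$ enters. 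A buyer who cooperates throughout never reaches the fallback. I would first restate the mechanism, recall from Section~\ref{sec:mechProp} that it is IIR and non-payment forceful (properties that concern only the posted terms and the non-coercive opt-out, hence are preserved under retuning $\prob$), fix the cooperation price at $(1-\epsilon)\mu$ and the reward schedule so that the cooperation surplus is exactly $\epsilon\mu$ per round, and then choose $\prob$ so the two bounds below hold simultaneously --- this is the step that makes the $\revmye$-coefficient come out to $\tfrac{\epsilon}{1+\epsilon}$ (the analogous accounting for the lookahead version yields $\tfrac{\epsilon}{2}$).

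\noindent\textbf{No-policy-regret $\Rightarrow (1-\epsilon)\mu$.} Mirroring the $k$-lookahead case: (i) the fixed action ``accept the cooperation offer every round'' keeps the mechanism in the cooperation regime, gives the buyer realized utility $\sum_t (v_t - (1-\epsilon)\mu) \ge \epsilon\mu T - o(T)$ by concentration of $\sum_t v_t$ around $\mu T$, and pays the seller $(1-\epsilon)\mu T - o(T)$; (ii) no-policy-regret then forces the learner's utility to be at least $\epsilon\mu T - o(T)$; (iii) the crux is that achieving this forces acceptance in all but $o(T)$ rounds. The only rounds worth declining are those with $v_t$ below the cooperation price, and these arrive at a constant rate $F((1-\epsilon)\mu)$ (if that rate is zero the conclusion is immediate), so $\Omega(T)$ declines exhaust the $o(T)$ decline-tolerance within $o(T)$ rounds, after which the buyer is confined to the fallback, whose reserve is set high enough that the buyer's residual per-round utility there is strictly below $\epsilon\mu$; hence the total falls short of $\epsilon\mu T - o(T)$, a contradiction. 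One also checks that deliberately triggering the fallback and driving its price estimate down is itself dominated by cooperating, so the learner's policy-optimal response really is to cooperate, and the seller collects $(1-\epsilon)\mu$ in each of the $\ge T-o(T)$ accepted rounds.

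\noindent\textbf{No-simple-regret $\Rightarrow \tfrac{\epsilon}{1+\epsilon}\revmye$.} Here the comparator class of a no-simple-regret learner --- fixed actions scored against the \emph{realized} state sequence --- is too weak to ``use'' the cooperation structure, so the learner may make deviations whose one-shot counterfactual rewards look competitive and thereby push the mechanism into the fallback regime, forfeiting the cooperation rewards. I would lower-bound the revenue by cases on the learner's trajectory. If it essentially follows the cooperation offer, the seller already collects $\approx(1-\epsilon)\mu$ per round, which exceeds $\tfrac{\epsilon}{1+\epsilon}\revmye$. Otherwise it triggers the fallback, and for the remaining $T-o(T)$ rounds it faces the fallback's posted prices; since those are maintained at a level that makes ``always accept'' the benchmark-optimal fixed action against the realized prices, the no-simple-regret guarantee forces the learner to accept a $\Theta(T)$-mass of those rounds and pay for them, and the price it pays --- together with the $o(T)$ regret of the reserve search --- gives per-round revenue at least $\tfrac{\epsilon}{1+\epsilon}\revmye - o(1)$ there. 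Combining the cases (and using that bundle-round revenue is nonnegative) gives total revenue $\ge \tfrac{\epsilon}{1+\epsilon}\revmye T - o(T)$.

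\noindent\textbf{Main obstacle.} The technical heart is making these two analyses coexist under one choice of $\prob$ and fallback parameters: the fallback must be robust enough to guarantee $\tfrac{\epsilon}{1+\epsilon}\revmye$ per active round even against an \emph{adversarial} no-simple-regret learner (as constrained by the definition in Section~\ref{sec:prelim}) --- in particular it must not post prices such a learner could drive arbitrarily low, and it must extract a Myerson-type revenue knowing only $\mu$, which is precisely where the $\const\sqrt{\log T/n}$ confidence terms and a UCB-style argument are needed --- yet it must simultaneously be unattractive enough that a forward-looking learner strictly prefers the cooperation offer, since otherwise a no-policy-regret learner would defect to the cheaper fallback and the $(1-\epsilon)\mu$ bound would collapse. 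Discharging this tension, and then routinely bounding the $o(T)$ error terms (value concentration, the decline-tolerance, the reserve-search regret) and checking that IIR and non-payment forcefulness survive the retuning, completes the proof.
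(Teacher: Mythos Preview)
Your proposal mischaracterizes both the mechanism and the revenue argument for no-simple-regret learners. The mechanism $\mthree$ of Section~\ref{sec:our-mech} is not a posted-price accept/decline scheme with an $o(T)$ decline-tolerance and a UCB-style reserve search in a fallback regime; it is a first-price mechanism in which the good state accepts \emph{any} bid at the bid price (so bidding $0$ there wins the item for free), and the bad state accepts bids $\ge p$ with a fixed probability $\rho$, again at the bid price. There is no reserve learning and no $\const\sqrt{\log T/n}$ term anywhere in the proof of Theorem~\ref{th:2}; you appear to have blended in ingredients from the ex-post-IR variant of Appendix~\ref{sec:inf-lookahead}. As a consequence, your case analysis for the no-regret learner (``either cooperates, or faces fallback posted prices that make `always accept' benchmark-optimal'') does not apply to the actual mechanism.

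The paper's argument for the no-regret bound is a state-counting argument built around a specific benchmark expert $f$: bid $0$ in good states (thereby collecting $v_t$ for free) and bid $\mpr$ in bad states when $v_t\ge \mpr$. Because $f$ already extracts full value in every good state, a learner who remains in a good state for more than one consecutive round must have average bid at least $(1-\epsilon)\mu$ on the trailing rounds, losing $\Theta(1)$ per trailing round relative to $f$; the no-regret constraint therefore caps the number of trailing good states at $o(T)$. Hence the realized trajectory is essentially lone good states separated by runs of bad states, and since each escape from a bad-state run requires winning a $\rho$-coin, $\Ex[\#\text{bad states}]\ge (T-1)/(1+\rho)$. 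The no-regret constraint further forces the learner to bid $\ge \mpr$ in all but $o(T)$ of the bad rounds with $v_t\ge \mpr$, yielding revenue $\frac{\rho}{1+\rho}\,\mpr(1-F(\mpr))\,T-o(T)=\frac{\rho}{1+\rho}\revmye\,T-o(T)$; taking $\rho\uparrow\epsilon$ gives the $\frac{\epsilon}{1+\epsilon}$ coefficient. Your proposal lacks this expert construction and the associated good/bad-state accounting, which is precisely where the $\frac{\epsilon}{1+\epsilon}$ comes from. Your no-policy-regret argument is closer in spirit (the paper also uses the constant bid $(1-\epsilon)\mu$ as the policy benchmark and shows that bad states, where per-round utility is at most $\rho\mu<\epsilon\mu$, must number $o(T)$), but it too is phrased for the wrong mechanism.
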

Note that our robust mechanism navigates a  tradeoff between revenue achieved for different types of buyers, i.e., a tradeoff between $\alpha$ and $\beta$. Upfront, it is not obvious if $\alpha$ must be decreased to increase $\beta$. We prove an impossibility result showing that any dynamic mechanism must face such a tradeoff. Further, our impossibility result provides a quantitative limitation on the revenue tradeoff achievable,  demonstrating that our mechanism achieves the best tradeoff possible within a constant factor. 
\begin{theorem} \label{thm:lowerbound}
There exists a regular and decreasing hazard rate distribution $F$ such that, for 
all $\epsilon \in [0,1]$ and $\delta>1-\frac{\epsilon}{2}$, there is no mechanism that is non-payment forceful 
and $(\epsilon, \delta)$-robust against lookahead attitudes.
\end{theorem}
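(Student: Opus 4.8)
The plan is to take for $F$ an equal-revenue type law — concretely the Pareto distribution with $1-F(x)=x^{-2}$ on $[1,\infty)$, which is regular (virtual value $x/2$) and has decreasing hazard rate ($2/x$), and for which $\revmye=1$ and $\mu=2$ — and to show that any non-payment forceful mechanism $M$ that gets average revenue at least $\epsilon\cdot\revmye-o(1)$ against every myopic buyer gets at most $\bigl(1-\tfrac{\epsilon}{2}\bigr)\mu+o(1)$ against the optimal $1$-lookahead buyer. Since the $1$-lookahead buyer is one of the $k$-lookahead types, this rules out $(\epsilon,\delta)$-robustness against lookahead attitudes for every $\delta>1-\tfrac{\epsilon}{2}$. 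Write $x_t,p_t,u_t:=x_tv_t-p_t$ for the (random) allocation, payment, and buyer utility in round $t$ under an optimal $1$-lookahead strategy. As allocations lie in $[0,1]$ and values are i.i.d.\ from $F$, the total revenue of $M$ against this buyer is $\E[\sum_t x_tv_t]-\E[\sum_t u_t]\le \mu T-\E[\sum_t u_t]$, so it suffices to lower bound the buyer's total utility.

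The handle on the buyer's utility is a one-step deviation. In round $t$, at history $h_{t-1}$ with value $v_t$, the buyer could instead play the myopically optimal bid, obtaining single-round utility $\tilde u_t:=\max_b\E[x_t(b)v_t-p_t(b)\mid h_{t-1}]\ge 0$, and then continue optimally. Because $M$ is non-payment forceful, the buyer can secure non-negative utility in every round from any history, so this deviation yields at least $\tilde u_t$ over the window $\{t,t+1\}$; optimality of the $1$-lookahead action then gives $u_t\ge \tilde u_t-W_{t+1}$ with $W_{t+1}:=\E[u_{t+1}\mid\mathcal F_t]$. Taking expectations, applying the tower rule, and telescoping over $t=1,\dots,T$ (using $u_{T+1}=0$ and $u_1\ge 0$) gives $2\,\E[\sum_t u_t]\ge \E[\sum_t\tilde u_t]$, hence the total revenue of $M$ against the $1$-lookahead buyer is at most $\mu T-\tfrac12\E[\sum_{t=1}^T\tilde u_t]$. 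It therefore remains to prove $\E[\sum_{t}\tilde u_t]\ge \epsilon\,\mu\,T-o(T)$.

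Proving this lower bound is the main obstacle, and it is where the choice of $F$ enters. For the Pareto$(2)$ law every one-round truthful IR auction leaves buyer surplus at least as large as its revenue (in fact buyer surplus minus revenue equals $2u(\underline v)\ge0$, twice the lowest type's utility), so $\E[\tilde u_t\mid h_{t-1}]$ is at least the one-round revenue $M$ extracts from a myopic buyer at $h_{t-1}$. One then has to argue that a mechanism extracting $\epsilon\revmye$ per round on average from a myopic buyer cannot keep these one-round surpluses small along the $1$-lookahead play: the $1$-lookahead buyer departs from the myopic action in a round only when she is being strictly overcompensated by future utility, so pushing $\sum_t\tilde u_t$ below $\epsilon\mu T$ forces either a violation of the myopic revenue guarantee — the two plays are coupled through the mechanism's reactions to deviations — or an increase in $\E[\sum_t u_t]$, which by the first paragraph again lowers the $1$-lookahead revenue. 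The delicate points are to track histories carefully, since the myopic and $1$-lookahead plays diverge, and to treat separately the degenerate regime in which the one-round auctions along the $1$-lookahead play are near full surplus extraction (handled directly by the welfare/payment accounting above). Combining $\E[\sum_t\tilde u_t]\ge \epsilon\mu T-o(T)$ with the previous paragraph yields total revenue against the $1$-lookahead buyer at most $(1-\tfrac{\epsilon}{2})\mu T+o(T)<\delta\mu T$ for all large $T$ and all $\delta>1-\tfrac{\epsilon}{2}$, the desired contradiction; one also checks that $F$ is regular with decreasing hazard rate and that the resulting frontier $\delta\le 1-\tfrac{\epsilon}{2}$ matches the $(\tfrac{\epsilon}{2},1-\epsilon)$ guarantee of Theorem~\ref{th:1} up to a constant factor in the slope.
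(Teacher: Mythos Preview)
Your proposal has a genuine gap at exactly the point you yourself flag as ``the main obstacle'': lower bounding $\E\bigl[\sum_t \tilde u_t\bigr]$ along the $1$-lookahead trajectory by $\epsilon\mu T$. The myopic revenue hypothesis tells you only that the sum of one-round revenues along the \emph{myopic} trajectory is at least $\epsilon\,\revmye\, T$; it says nothing about states visited under $1$-lookahead play. Your coupling sketch (``pushing $\sum_t\tilde u_t$ low forces either a myopic revenue violation or higher $\sum_t u_t$'') is not an argument: the mechanism is free to make the myopic one-round auction lucrative precisely along the myopic path and barren along the $1$-lookahead path, and nothing you have written rules this out. There is also a smaller problem earlier: $1$-lookahead optimality yields $u_t \ge \tilde u_t - \E[\tilde u_{t+1}\mid\mathcal F_t]$, not $u_t \ge \tilde u_t - \E[u_{t+1}\mid\mathcal F_t]$, so your telescoping with $W_{t+1}=\E[u_{t+1}\mid\mathcal F_t]$ does not go through as stated.

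The paper's proof sidesteps both difficulties by working with the \emph{infinite}-lookahead buyer rather than the $1$-lookahead buyer. Since $(\epsilon,\delta)$-robustness requires the $\delta\mu$ guarantee against \emph{every} $k\ge 1$, it suffices to break it for $k=T-1$. An infinite-lookahead buyer can simply adopt the myopic strategy for all $T$ rounds; hence her optimal total utility is at least the myopic buyer's total utility along the actual myopic trajectory. For the Pareto law with index $\alpha$, a direct calculation (via Myerson's payment identity, writing each state's one-round auction as a mixture of posted prices) shows that in every state the myopic buyer's expected utility equals $\tfrac{1}{\alpha-1}$ times the seller's expected revenue. Thus myopic total utility $\ge \tfrac{1}{\alpha-1}\,\epsilon\,\revmye\, T$, the infinite-lookahead buyer inherits this lower bound, and the seller's revenue against her is at most $\mu T - \tfrac{1}{\alpha-1}\epsilon\,\revmye\, T = (1-\epsilon/\alpha)\mu T$. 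Letting $\alpha\downarrow 2$ gives the $(1-\epsilon/2)\mu$ ceiling. The point is that the mimicking argument makes the trajectory-mismatch problem disappear entirely; no telescoping or coupling is needed.
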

}
The proofs of Theorems~\ref{th:1}--\ref{th:2} are provided in Section~\ref{sec:revenue tradeoffs} and Appendix~\ref{app:lookahead}, while the proof of Theorem~\ref{thm:lowerbound} is provided solely in Appendix~\ref{app:lookahead}.

\paragraph{Injecting Ex-Post Individual Rationality}
The base mechanism provided thus far is robust with respect to different buyer behaviors, but only satisfies Interim Individual Rationality. Is it possible to strengthen this mechanism to also satisfy the stringent notion of per-round ex-post Individual Rationality, discussed earlier in the introduction and defined formally in Section~\ref{sec:mechProp}?
As it turns out, per-round ex-post IR is too strong a requirement when it comes to extracting revenue that is close to full surplus from forward-looking buyers with limited lookahead. Specifically, we show that there  exist valuation distributions for which the revenue achieved by any per round ex-post IR mechanism is exponentially smaller compared to full surplus for finite lookahead buyers. 
\begin{theorem}
\label{th:expostLB}
There exists a valuation distribution $F$ such that any non-payment forceful and per-round ex-post IR mechanism can achieve expected average (over rounds) revenue of at most  
$\log(k\mu)+1$ from a $k$-lookahead buyer with any finite $k\ge 1$.
\end{theorem}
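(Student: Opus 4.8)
The plan is to take $F$ to be a truncated equal-revenue distribution and to argue that per-round ex-post IR forces the round-$t$ payment to be at most the realized value, while a $k$-lookahead horizon caps the amount of future utility the mechanism can dangle in front of the buyer at round $t$ by roughly $k\mu$; against such a scale-free distribution, these two constraints together cut the per-round revenue down to $O(\log(k\mu))$. Concretely, let $F$ have density $f(v)=1/v^2$ on $[1,H]$ together with an atom of mass $1/H$ at $v=H$, so that $\revmye=1$, $\mu=1+\ln H$, and the full surplus is $\mu$; for $H$ large, $\log(k\mu)+1$ is exponentially smaller than $\mu$, which is the content of the theorem. It suffices to prove the per-round bound: for any non-payment-forceful, per-round ex-post IR mechanism facing a $k$-lookahead buyer, and every history through round $t-1$, the expected revenue in round $t$ conditioned on that history is at most $\log(k\mu)+1$; averaging over $t=1,\dots,T$ then gives the theorem.

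To prove the per-round bound I would fix a round $t$ and a history and collapse the problem to a single shot. Thinking of the buyer as choosing from the menu of round-$t$ outcomes it can induce, a value-$v$ buyer obtains (after averaging over mechanism randomness) an allocation $X(v)$, a payment $P(v)$, and a continuation value $\beta(v):=\E[\sum_{s=t+1}^{\min(t+k-1,T)}u_s]$ --- its expected utility over the rest of its length-$k$ decision window --- which is a well-defined function of the round-$t$ choice once the buyer's (again $k$-lookahead) future behavior is fixed. Three facts drive the argument. First, per-round ex-post IR gives $P(v)\le X(v)v$, and non-payment forcefulness supplies a null choice with $(X,P)=(0,0)$; so $\beta(v)\ge 0$ and the window utility $U(v):=X(v)v-P(v)+\beta(v)$ satisfies $U(v)\ge\beta(v)\ge 0$. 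Second, since future payments are nonnegative, $\beta(v)\le (k-1)\mu=:B$ --- this is the only place bounded lookahead enters. Third, the buyer picks its round-$t$ choice to maximize $X(v)v-P(v)+\beta(v)$, so $(X,U)$ satisfies the standard envelope relation $U'(v)=X(v)$; in particular $X$ is monotone and $U(v)\le U(1)+(v-1)$.

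Then I would compute the round-$t$ revenue via the identity $\E_v[P(v)]=\E_v[X(v)v]-\E_v[\int_1^v X(s)\,ds]+\E_v[\beta(v)]-U(1)$, which is immediate from $P=Xv+\beta-U$ and $U(v)=U(1)+\int_1^v X$. The key point is that for our $F$ the first two terms collapse: interchanging the order of integration and tracking the atom yields $\E_v[X(v)v]-\E_v[\int_1^v X(s)\,ds]=X(H)\le 1$ --- equivalently, the virtual value of $F$ vanishes off the atom. Hence the round-$t$ revenue is at most $1+\E_v[\beta(v)]-U(1)\le 1+\E_v[\min(B,\,U(1)+v-1)]-U(1)$, using $0\le\beta(v)\le\min(B,U(v))\le\min(B,U(1)+v-1)$. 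An elementary calculation with $f(v)=1/v^2$ shows the last expression is at most $1+\ln(B+1)$ for every $U(1)\ge 0$; since $B=(k-1)\mu$, this is at most $\log(k\mu)+1$, and averaging over rounds finishes the proof.

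The main obstacle is setting up the single-shot reduction correctly: one must check that the buyer's round-$t$ incentives really are those of a single-parameter agent carrying an additive continuation bonus $\beta(\cdot)$ --- so that the envelope relation and the Myerson-type identity apply even when the mechanism ``excludes'' low-value buyers ($X\equiv 0$ on an interval is permitted) --- and that this bonus is genuinely capped by the decision-window surplus $(k-1)\mu$, which relies on per-round ex-post IR holding along every history path the $k$-lookahead buyer can be driven onto, not merely along the intended one. The remaining pieces --- the collapse $\E_v[X(v)v]-\E_v[\int_1^v X(s)\,ds]=X(H)$ for the truncated equal-revenue law (with the atom at $H$ tracked throughout) and the maximization of $\E_v[\min(B,U(1)+v-1)]-U(1)$ over $U(1)\ge 0$ and monotone $X$ --- are routine once the reduction is in place.
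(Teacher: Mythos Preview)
Your proposal is correct and takes a genuinely different route from the paper. The paper argues directly: in a fixed state, the set of optimal \emph{allocating} bids does not depend on $v$ (for $x=1$ the objective $v - p(\cdot) + \text{continuation}(\cdot)$ is maximized over bids independently of $v$), so there is a threshold $v_{\min}$ above which the buyer allocates; IR at $v_{\min}$ then forces the price to satisfy $p \le v_{\min} + k\mu$, per-round ex-post IR gives $p \le v$, and integrating $\min\{v,\,v_{\min}+k\mu\}\cdot \mathbf{1}[v \ge v_{\min}]$ against the equal-revenue density $1/v^2$ yields $\log(1+k\mu/v_{\min})+1 \le \log(k\mu)+1$; randomized allocation is handled by stratifying bids by allocation level $x$ and repeating. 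Your approach instead casts the round-$t$ problem as a single-parameter auction with effective payment $P-\beta$, applies the envelope identity, and uses that the virtual value of the truncated equal-revenue law vanishes so that $\Ex[Xv]-\Ex[\int_1^v X]=X(H)\le 1$; the remaining term $\Ex[\beta]-U(1)$ is then controlled by the two inequalities $\beta\le B$ (bounded lookahead) and $\beta\le U(1)+v-1$ (ex-post IR plus envelope). Your route is more systematic---it handles randomized allocation for free and makes the role of the equal-revenue distribution transparent---while the paper's threshold argument is more elementary and avoids invoking the payment identity. Two small corrections: by the paper's definition of $U_k^t$ the continuation window is rounds $t+1,\ldots,t+k$, i.e., $k$ rounds, so $B=k\mu$ rather than $(k-1)\mu$ (this only helps you); and in your final integral the regime $B+1>H$ should be noted separately---there your computation gives the bound $1+\ln H=\mu$, which is still $\le \log(k\mu)+1$ precisely because $k\mu\ge H$ in that regime (or one simply takes $H$ large relative to the $k$'s of interest, since the theorem only asks for one $F$ exhibiting the gap).
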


In view of this limitation for finite lookahead buyers, we turn our attention to myopic buyers and infinite lookahead buyers, providing a modification to our base mechanism that guarantees per-round ex-post IR for these types of buyers. To be precise, for infinite lookahead buyers our mechanism guarantees per-round ex-post IR with high probability, i.e., there exists a strategy for the buyer that with high probability is both optimal and satisfies the per-round ex-post IR conditions. We provide the modification to our mechanism and establish its properties in Appendix~\ref{sec:inf-lookahead}, establishing the following:
\begin{theorem}
\label{thm:expost-ir-inf}
For any $\epsilon\in (0,1)$, there exists a non-payment forceful and IIR 
mechanism that is $(\frac{\epsilon}{2}, 1-\epsilon)$-robust against {\it 
lookahead attitudes}, ex-post IR against a myopic buyer, and with a high 
probability ex-post IR against an 
infinite lookahead buyer. 
\end{theorem}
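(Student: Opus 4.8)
The plan is to take the base mechanism $M$ guaranteed by Theorem~\ref{th:1} --- which is state-based, non-payment forceful, IIR, and $(\tfrac{\epsilon}{2},1-\epsilon)$-robust against lookahead attitudes --- and to modify only the \emph{timing} of the payments it collects along its bundling path, leaving the allocation rule, the state transitions, and all incentive-relevant comparisons essentially intact. The guiding observation is that the only reason $M$ can violate per-round ex-post IR is the classical one: to extract revenue close to the full surplus $\mu$ from a forward-looking buyer, $M$ must bundle across rounds, and in its raw form this amounts to an order-$T\mu$ payment concentrated in a few rounds, which can dwarf the realized value $v_t$ in those rounds. A myopic buyer never triggers such a payment, so for that buyer essentially no modification is needed; all the work is in the infinite-lookahead case.

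For the myopic buyer I would argue directly from non-payment-forcefulness. In every state the mechanism exposes a ``decline'' action yielding zero payment and zero allocation, hence zero utility. A myopic buyer, by definition, plays the action maximizing her \emph{current-round} utility $\alloc_t \val_t - \price_t$; since the decline action is always available, this maximum is $\ge 0$, and any action achieving it has $\price_t \le \alloc_t\val_t \le \val_t$. Thus her optimal play is automatically per-round ex-post IR, and since her behavior is unchanged by any modification that only touches the bundling path, the $\tfrac{\epsilon}{2}\cdot\revmye$ guarantee of Theorem~\ref{th:1} against myopic/low-lookahead buyers is preserved.

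For the infinite-lookahead buyer, the modification is to \emph{amortize} the opt-in price: rather than demanding it up front, the mechanism collects it as a stream of per-round charges, each capped at the buyer's reported value $\bid_t$ that round, while maintaining a running ledger of the promised total still outstanding. Under the intended optimal strategy (opt in, bid truthfully), the per-round charge is then at most $\val_t$ \emph{by construction}, so per-round ex-post IR holds deterministically along that path; the ``with high probability'' qualifier enters only through the comparison of the total that can be collected this way, $\sum_t \min(\val_t,\cdot)$, against the target $(1-\epsilon)\mu T - o(T)$. Here I would invoke concentration of the i.i.d.\ valuations: with probability $1-o(1)$ the partial sums $\sum_{s\le t}\val_s$ stay within $O(\sqrt{T\log T})$ of $t\mu$ --- exactly the slack that the $\uterm{T}$-type error terms in the earlier analysis already track --- so a schedule targeting a per-round rate slightly below $\mu$ clears its ledger and delivers the required revenue, while on the complementary $o(1)$-probability event no ex-post IR claim is made, consistent with the statement.

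The step I expect to be the main obstacle is re-establishing incentive compatibility \emph{after} spreading the payments out. In the base mechanism the large upfront payment is precisely what makes defection unattractive; amortization reintroduces the temptation to enjoy the cheap early allocations and then abandon the contract before the ledger is settled. Ruling this out requires the modified mechanism to remain self-enforcing: at every round the outstanding ledger balance must not exceed the buyer's continuation surplus from staying, net of future charges, so that walking away --- which sends her to the punishment state whose continuation value is already pinned down in the proof of Theorem~\ref{th:1} --- is never profitable. This forces the per-round charge rate and the punishment threat to be chosen so that these two quantities track each other up to the same $O(\sqrt{T\log T})$ slack; once that is done, the buyer's best response is again ``opt in and bid truthfully,'' which simultaneously re-certifies the $(1-\epsilon)\mu$ revenue and, along that response, per-round ex-post IR. The remaining properties --- non-payment-forcefulness, IIR, and the $(\tfrac{\epsilon}{2},1-\epsilon)$ guarantee for every finite $k\ge 1$ --- are inherited, since the modification only redistributes in time payments the buyer was already willing to make in aggregate, and leaves the decline option, the Myerson-mode rounds driving the myopic/low-lookahead bound, and the long-run expected payment unchanged.
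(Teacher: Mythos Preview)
Your myopic argument is correct and matches the paper's: non-payment-forcefulness gives a zero-utility action in every state, so a myopic optimizer's per-round utility is automatically nonnegative.

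For the infinite-lookahead part, you are in the right neighborhood (concentration of $\sum v_t$, an $O(\sqrt{T\log T})$ slack, aiming to make truthful bidding both optimal and per-round IR) but you have mischaracterized the base mechanism and, as a result, the fix. $\mthree$ is \emph{not} an upfront-payment mechanism: it is per-round first price, so payment is already ``capped at $b_t$.'' The per-round ex-post IR violation is not that an order-$T\mu$ charge is concentrated in a few rounds; it is that the buyer's optimal strategy is to bid roughly $(1-\epsilon)\mu$ every round to hug the good-state boundary, and in rounds with $v_t<(1-\epsilon)\mu$ this yields negative realized utility. Consequently, your proposed modification (cap the per-round charge at $b_t$, keep a ledger) is a no-op on the payment side, and your main worry (enjoy cheap early allocations, then abandon before the ledger clears) is aimed at a mechanism that is not $\mthree$.

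The paper's modification $M_\infty$ does two concrete things you do not. First, it relaxes the good-state test by a vanishing additive slack (and grants an initial credit of order $\mu\sqrt{T}$), so that \emph{truthful} bidding $b_t=v_t$ keeps the buyer in the good state with probability $1-1/\mathrm{poly}(T)$ by Chernoff and a union bound; since the slack only makes staying good cheaper, Lemma~\ref{lem:alwaysgood} carries over verbatim, which is how the $k\ge 1$ guarantee is preserved (not via a generic ``payments are only redistributed in time'' argument). Second, it imposes a hard revenue cap $R_{\mathrm{target}}=(1-\epsilon)\mu T - O(\sqrt{T\log T})$: once total payment reaches $R_{\mathrm{target}}$ the mechanism stops charging. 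This cap is what makes truthful bidding \emph{optimal}: every strategy leaves the buyer at least $T\mu - R_{\mathrm{target}}$ in expectation, and truthful bidding attains that (w.h.p.) while being per-round ex-post IR by construction. This single device replaces your delicate ``outstanding balance $\le$ continuation surplus'' tracking; no separate self-enforcement argument is needed.
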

The per-round ex-post IR property, even if only guaranteed with high 
probability for infinite lookahead buyers, is a strong requirement that 
successfully eliminates  advanced selling or threat-based mechanisms. For 
example, let us consider the advanced selling mechanism that in every round 
asks the buyer to pre-pay for tomorrow's good a price equal to its expected 
value, for the guarantee that this good will be allocated tomorrow. This 
mechanism gets expected average per round revenue equal to the mean of the 
distribution, but it violates high probability ex-post IR, since a 
forward-looking buyer will choose 
to pre-pay for tomorrow's good in every round. This results in realized utility 
of $v-\mu$, which may be negative  with constant probability.

\removeEC{
\paragraph{\underline{Our Setting and Research Questions}} 
We study a repeated interaction between a single seller and a single buyer over multiple rounds. At the beginning of 
each round $t = 1,2,\ldots$, there is a 
single fresh good for sale whose private value $\val_t \in \Val$ 
for the buyer is drawn from a publicly known distribution\footnote{As we 
explain later, for our positive results, the seller only needs to know the mean 
$\mu$ and not the whole 
distribution $F$.} $F$ with finite expectation $\mu$. 
The buyer observes the valuation $\val_t$ and makes a bid $b_t$. The 
good for sale in round $t$ has to be either allocated to the buyer or discarded 
immediately (i.e., not 
carried forward). The buyer's valuations are additive across rounds. 
Our goal is to investigate the sensitivity of revenue extraction to variations 
in both the lookahead attitude and learning behavior of the buyers. On 
lookahead attitudes, by current understanding the notion of forward-looking 
buyer is a key driver of increase in revenue in dynamic mechanisms, but is a 
strong assumption like {\it infinite} lookahead really necessary?
We ask the 
following questions. 
\begin{enumerate}
\item Is the infinite lookahead assumption really necessary to get a revenue close 
to the full surplus? If buyers are forward looking, i.e., $k$-lookahead for finite 
$k\geq 1$, but not infinite lookahead, what loss should the seller suffer because of this?
\item How does a mechanism that gets close to the full surplus as revenue with forward-looking buyers, fare against myopic ($0$-lookahead) buyers? Can we design a \emph{robust} mechanism that
\emph{simultaneously} achieves a high 
revenue against all lookahead levels, including myopic buyers?
\end{enumerate}
On learning behavior, along similar lines, we question the necessity of 
assuming a completely informed buyer --- we ask how well can a dynamic 
mechanism fare against a buyer using reasonably sophisticated learning 
algorithms 
to decide bids based on the past observations. Formally, we model learning 
behavior of buyers using the popular concept of no-regret learning, and allow 
different levels of learning sophistication by considering buyers who minimize 
simple regret vs. policy regret (defined in section \ref{sec:prelim}).  
\paragraph{\underline{Our Contributions}} In this paper, we answer the above 
questions by formalizing the notion of robustness to the buyer's lookahead attitude. In particular, we make the following contributions: 
\begin{enumerate}
\item We introduce a {\bf novel revenue optimization framework} to study robust dynamic mechanisms. In this framework,  a desirable mechanism is required to simultaneously achieve a high revenue against a range of buyer behaviors, namely, (a) buyers with different lookahead attitudes (myopic, $k$-lookahead, infinite lookahead), and (b) learning buyers using different learning strategies (no-regret learning, policy-regret learning). 
\item We provide {\bf impossibility results on the achievable revenue 
tradeoffs} in our framework, showing that, for all $\epsilon \in [0,1]$, no 
mechanism that is non-payment forceful (see definition in Section~\ref{sec:prelim}) 
and obtains an average per round revenue of 
$(1-\epsilon)\mu$ against an infinite lookahead buyer, can simultaneously get 
average per round revenue more than $2\epsilon\revmye$ against a myopic 
buyer, where $\revmye$ is the optimal revenue in a single-item auction against 
a single buyer with distribution $F$. See Theorem~\ref{thm:lowerbound}.
\item We present a {\bf simple state-based mechanism} (named $\mthree$) with the following desirable properties; see Theorem~\ref{th:main}.\\
\noindent {\bf Robustness:} It simultaneously earns an average per round revenue of at least (i) $(1-\epsilon)\mu$
against a forward-looking buyer ($k$-lookahead for any $k\ge 1$), (ii) 
$\frac{\epsilon}{2}\revmye$ against a myopic buyer, (iii) $(1-\epsilon)\mu$ 
against a policy-regret learner, (iv) $\frac{\epsilon}{2}\revmye$ against a 
no-regret learner\footnote{While the tradeoff for myopic vs $k$ lookahead buyer 
is $\frac{\epsilon}{2}\revmye$ vs $(1-\epsilon)\mu$, the tradeoff that we can obtain for no-regret 
vs no-policy-regret learner is actually slightly better: $\epsilon\revmye$ vs 
$(1-\epsilon)\mu$. But when we want a single mechanism to get all four 
simultaneously, we have to settle for the $\frac{\epsilon}{2}\revmye$ vs 
$(1-\epsilon)\mu$ tradeoff for no-regret vs no-policy-regret as well}, for any 
parameter 
$\epsilon\in (0,1]$.\\
\noindent {\bf Near-optimality:} Even if a mechanism were tailored to a 
specific behavior, the average per round revenue achievable would be bounded 
by: $\revmye$ for a myopic buyer \cite{Mye81}, 
and by $\mu$ for any forward-looking or learning buyer (the 
revenue upper bound of $\mu$ is trivial as it is the entire surplus). 
Thus, our mechanism achieves a constant factor approximation to the 
optimal revenue achievable in each individual setting, by choosing say 
$\epsilon = \frac{1}{2}$. Further, as demonstrated by our impossibility result, 
the specific {\it revenue tradeoff} against myopic and forward looking buyers 
achieved by our mechanism is also near-optimal.\\
\noindent {\bf Individual Rationality:} Our mechanism is interim individually 
rational (IIR) for a $k$-lookahead buyer for every $k \geq 0$; see the 
discussion above or Section~\ref{sec:prelim} for a 
definition. Moreover, it is ``almost per-round ex-post IR'' for $k$-lookahead 
buyers, with large enough $k$ (see Section~\ref{sec:prelim-ir} for a precise 
definition).
For buyers who are learning, per-round 
ex-post IR is not the right measure because 
no-regret-learning algorithms should be explorative, and will venture into 
negative utility strategies as well. A meaningful measure is whether at the end of 
the game, the buyer's learning strategy achieves a non-negative, or close to
non-negative, overall utility. Accordingly, we show that under mild assumptions 
on the class of experts, any no-regret learning, or no-policy-regret learning algorithm 
is guaranteed to get a buyer utility no smaller than $-o(T)$. 
\end{enumerate}
}

\section{Our framework}
\label{sec:prelim}

In this section, we present the main components of our framework, including a {\it state-based mechanism design setting}, {\it formal models for buyers' heterogeneous lookahead and learning behaviors}, and a {\it revenue optimization objective} to calibrate the mechanisms against different behaviors.
\subsection{Setup}
\label{sec:prelim-setup}
We study a repeated interaction between a single seller and 
a single buyer for a finite number of rounds $T$. At the beginning of 
each round 
$t = 1,2,\ldots,T$, there is a 
single fresh good for sale whose private value $\val_t \in \Val$ 
 for the buyer 
is drawn from a publicly known distribution $F$ with 
expectation $\mu$. 
The buyer observes his value $\val_t$ at the beginning of the round and makes a bid $b_t$. The 
good for sale in round $t$ has to be either allocated to the buyer or discarded 
immediately (i.e., not carried forward). The buyer's valuations are additive across rounds. 

\paragraph{Round outcome.} The outcome of the game in round $t$ is a pair 
$(x_t,p_t)$, where $x_t \in 
\{0,1\}$ indicates whether or not the buyer received the good and $\price_t \in 
\R$ 
is the payment made by the buyer to the seller. For the outcome pair $(x_t,\price_t)$ in round $t$, the linear 
utility $\util_t$ of the buyer in round $t$  is given 
by $\util_t(b_t) = \val_t\alloc_t - \price_t$, \new{and the seller's revenue is given by $p_t$.}

\paragraph{Dynamic state-based mechanism.} We consider dynamic mechanisms, where the allocation and payment $(x_t, p_t)$, in a round $t$, may depend on the current state $s_t$, in addition to the bid $b_t$ made by the buyer. Such a dynamic mechanism $\mech$ is defined as a 5-tuple  
$\mech = (\newstatespace, \trans,\alloc,\price, s_1)$, where 
\begin{enumerate}
\item $\newstatespace$ is the state space over which the mechanism operates. 
The state space $\newstatespace$ can be finite dimensional or countably infinite 
dimensional. The cardinality of $\newstatespace$ can be finite, countably infinite 
or uncountably infinite. 

\item $\alloc: \newstatespace \times \R^+ \rightarrow \simp^{\{0,1\}}$ is 
the (randomized) allocation 
function, which at the beginning of round $t$ receives as input the state $\newstate_t 
\in \newstatespace$ in round $t$, the bid $\bid_t \in \R^+$ made by the bidder
in round $t$, and outputs the allocation $\alloc_t \in \{0,1\}$ for the bidder in round $t$, where 
$ \alloc_t \sim \alloc(\newstate_t,\bid_t).$

\item $\price: \newstatespace \times \R^+ \times \{0,1\} \rightarrow \R$ is 
the payment function, which at the beginning of round $t$ takes as input the state $\newstate_t
\in \newstatespace$ in round $t$, the bid $\bid_t \in \R^+$ made by the bidder in 
round $t$, and the allocation $\alloc_t$ sampled as above,
and outputs the payment $\price_t$ for the bidder in round $t$, 
i.e., $ \price_t = \price(\newstate_t,\bid_t,\alloc_t).$
\removeEC{
We restrict our mechanisms to require that the payment $\price_t$ is always non-negative, and $0$ when the bid is $0$, i.e.,
\begin{equation}
\label{eq:non-payment-forceful}
\price(\newstate_t,\bid_t,\alloc_t) \ge 0, \text{ and } \price(\newstate_t, 0,\alloc_t) = 0, \text{ for any } \newstate_t, \bid_t, \alloc_t.
\end{equation}
 This ensures that the mechanisms are {\em non-payment forceful}, as they cannot be forced to pay the buyer in any state and they cannot force payments out of buyers who bid $0$. 
}
\item $Q:\newstatespace\times\R^+ \times\{0,1\} \times \R
\rightarrow \simp^\newstatespace$ is a state-transition function that takes as input at 
the end of round $t$ the state $\newstate_t \in \newstatespace$, the bid $\bid_t \in 
\R^+$ of the bidder in round $t$, the allocation 
$\alloc_t \in \{0,1\}$ and the 
payment $\price_t \in \R$, and 
outputs the distribution of next state $\newstate_{t+1}$ for round $t+1$, 
i.e., $\newstate_{t+1} \sim Q(\newstate_t,\bid_t,\alloc_t,\price_t)$. \footnote{Given that our price function is a deterministic function of $\newstate_t, \bid_t, \alloc_t$ we could have also suppressed $\price_t$ from the arguments of $Q$.} 
\item $s_1 \in \newstatespace$ is the starting state.
\end{enumerate} 
\paragraph{Bidding strategy.}
We consider multiple types of buyer behaviors, namely, different lookahead attitudes and learning behaviors. Therefore, the buyer's `optimal' bidding strategy is not necessarily the one that maximizes linear utility $u_t(b) = v_t x_t - p_t$ in round $t$. Instead the choice of bids $\{b_t\}$ is determined by the behavioral setting, and we refer to them as `behaviorally optimal' choice of bids.

\subsection{Lookahead behaviors}
\paragraph{$\ell$-lookahead utility.} We will define lookahead attitudes of buyers using the concept of $\ell$-lookahead utility, i.e., the total utility over the current and next $\ell$ rounds. Assuming there are at least $\ell$ remaining rounds following round $t$, the buyer evaluates a bid $b$ in round $t$ by computing its expected utility over the current round plus the maximum expected utility obtainable over the next $\ell$ rounds. More precisely, 
at round $t$, given the current state $s_t$ and valuation $v_t$, for any $\ell$, the buyer's expected $\ell$-lookahead utility for bid $b$, assuming $T\ge t+\ell$, is defined as: 
\begin{eqnarray}
U_{\ell}^{t}(\newstate_t,\val_t,b) & := & \Ex[u_t(b) + \sup_{b_{t+1}, \ldots, b_{t+\ell}} \sum_{j=1}^\ell u_{t+j}(b_{t+j})] \\
& = & \Ex \left[\val_t \ \alloc_t(b) - \price_t(b) + \sup_{b'} U_{\ell-1}^{t+1}(\newstate_{t+1},\val_{t+1}, b') \bigg\lvert \newstate_t,\val_t\right], \label{eq:Uell}\\
U_0^t(\newstate_t,\val_t,b) & := & \Ex \left[\val_t \ \alloc_t(b) - \price_t(b)  \bigg\lvert \newstate_t,\val_t\right], \label{eq:U0}
\end{eqnarray}
 where 
$\alloc_t(b) \sim \alloc(\newstate_t,b), \price_t(b) = p(\newstate_t, b, x_t(b))
 $, $s_{t+1} \sim Q(s_t, b, x_t(b), p_t(b))$. And, the expectation was taken over the random values $\val_{t+1}\sim F$ and any randomization in the mechanism. 

\paragraph{Buyer lookahead behaviors.} We define the following types of buyers with different lookahead attitudes:
\begin{itemize}
\item {$k$-lookahead buyer:} A {\em $k$-lookahead buyer} is a buyer who, in every round $t$, picks his bid $b_t$ to maximize $\min\{k, T-t\}$-lookahead utility, i.e., a behaviorally-optimal bid for such a buyer in round $t$ is given by
\begin{equation}
\label{eq:kLookaheadBuyer}
b_t \in \arg \max_{b\in \R^+} U_{\min\{k,T-t\}}^{t}(\newstate_t,\val_t,b)
\end{equation}
We refer to the bid $b_t$ computed above as a {\bf $k$-lookahead optimal bid} (Note that in general, the maximizer in the above equation may not exist, see the technical remark below). 
Two special cases of $k$-lookahead buyers are:
\begin{itemize}
\item {Myopic buyer:} We refer to a $0$-lookahead buyer as a {\it myopic buyer}.
\item {Infinite-lookahead buyer:} If a mechanism lasts for $T$ rounds, we refer to a $k$-lookahead buyer with $k = T-1$ as an {\em infinite-lookahead buyer}. 
\end{itemize}
\item {Forward-looking buyer:} A {\em forward looking buyer} maximizes $k_t$-lookahead utility for some $k_t\in \{1, \ldots, T-t\}$, at every time $t$. Thus, a forward-looking buyer may use different lookaheads at different time steps, but always looks ahead at least one step. 
\end{itemize}
We say that a buyer is using a {\bf behaviorally-optimal policy} if the buyer's bid $b_t$ satisfies \eqref{eq:kLookaheadBuyer} for $k=k_t$ at all time steps $t$. 

\paragraph{Technical Remark.} To be completely formal, the definition of 
$k$-lookahead buyer given above requires that the maximizer in 
\eqref{eq:kLookaheadBuyer} exists. There are mechanisms in which this maximizer 
does not exist. For instance, consider a single-state mechanism that offers the 
item at a price of $\epsilon$, whenever the bid is $1-\epsilon<1$, and does not 
offer the item when the bid is $1$. This mechanism has no optimal $k$-lookahead 
bid for any $k$. Such mechanisms are undesirable as they make it difficult for 
the buyers to decide what bid to use, and therefore for the sellers to 
understand what revenue to expect, even if they know their buyer's lookahead 
attitude. For this reason, the mechanisms that we construct are such that there 
always exists an optimal  $k$-lookahead bid. 

\subsection{Learning behaviors}

\paragraph{Background on no-regret learning.} To formally model a broad class of buyer 
learning behaviors, we use the concept of {\it no-regret learning}, a widely 
studied solution concept in the context of  $T$ round online prediction problem 
with advice from $N$ experts. In this problem, at every round $t=1,\ldots, T$, 
an adversary picks reward ${\bf g_t}=\{g_{1,t}, \ldots, g_{N,t}\}$ where 
$g_{i,t}$ is the reward associated with expert $i$. The learner needs to pick 
an expert $i_t\in [N]$ to obtain reward 
$g_{i_t,t}$. Regret in time $T$ is defined as
\begin{equation}
\label{eq:Regret}
\text{Regret}(T) = \max_{i\in[N]} \sum_{t=1}^T g_{i,t} - \sum_{t=1}^T g_{i_t,t}.
\end{equation}
A {\it no-regret online learning algorithm} for this problem uses the past 
observations $g_{i_1, 1}, \ldots, g_{i_{t-1}, t-1}$ to make the decision $i_t$ 
in every round $t$ such that $Regret(T) \le o(T)$.  When the number of experts 
$N$ is finite, there are efficient and natural algorithms (e.g., EXP3 algorithm 
based on multiplicative weight updates) that achieve $O(\sqrt{NT\log N})$ 
regret. 

Note that 'regret' compares the total reward achieved by the learner to the reward of best {\it single} expert in hindsight.
Furthermore, even if the adversary is adaptive (i.e., generates ${\bf g_t}$ adaptively based on $i_1, \ldots, i_t$), in this simple-regret framework, the performance of the best expert is evaluated over the sequence of inputs ${\bf g_1, \ldots, g_T}$ produced by the adversary in response to the learner's decision, and not those that {\it would be produced} if this expert was used in all rounds.
This is an important distinction between the above definition of regret, for which efficient online learning algorithms like EXP3 are known, vs. the more sophisticated `{\it policy regret}' which we define next. 

\paragraph{Background on policy regret learning.}  A no-policy-regret learning algorithm is a 
more sophisticated learner based on the definition of policy regret from 
\cite{ADT12}. Such an algorithm faces an adaptive adversary, 
and achieves $o(T)$ {\it policy regret}, defined as:
\begin{equation}
\label{eq:policyRegret}
\text{Policy-regret}(T)= \max_{j_1,\ldots, j_T \in {\cal C}_T} \sum_{t=1}^T g_{j_t, t}(j_1, \ldots, j_{t}) - \sum_{t=1}^T g_{i_t, t}(i_1, \ldots, i_{t}).
\end{equation}
where ${\cal C}(T)$ is some benchmark class of deterministic sequences of 
experts of length $T$, and the $g_{j_t,t}$ and $g_{i_t,t}$ have 
been explicitly written as a function of past decisions to indicate adaptive 
adversarial response to the {\it sequence} of choices so far. A special case is 
where ${\cal C}(T)$ is the class of {\it single} expert sequences, so that 
\begin{equation}
\label{eq:policyRegret1}
\text{Policy-regret}(T)= \max_{i} \sum_{t=1}^T g_{i, t}(i, \ldots, i) - \sum_{t=1}^T g_{i_t, t}(i_1, \ldots, i_{t}).
\end{equation}

\paragraph{Our characterization of buyer learning behaviors.} We consider a buyer who only gets to observe  whether the current state is good ($s_t\in \bot$) or bad ($s_t \notin \bot)$, and the valuation $v_t$, before making the bid, and the outcome (allocation, price) after making the bid, but does not know (or does not trust) anything else about the seller's mechanism. 
Using these observations, the buyer is trying to decide bids $b_t$, using a learning algorithm under the experts learning framework described above. 
We formalize the notion of different levels of learning sophistication among buyers by considering two classes of learners: 
\begin{itemize}
\item {\it No-regret learner:} Such a buyer considers, as experts, a finite collection ${\cal E}$ of mappings from the state information ($s_t = \bot$ or $s_t \ne \bot$) and valuation $\val_t$ to a bid, i.e., set of experts 
\begin{equation}
\label{eq:experts}
{\cal E} = \{f:[\bot, \not\perp] \times \bar{V} \rightarrow \bar{V}\};
\end{equation}
where $\bar{V}$ is an  discretized (to arbitrary accuracy) version of $V$, in order to obtain a finite set of experts. 
 The buyer uses a no-regret learning algorithm to decide which expert $f_t\in {\cal E}$ to use in round $t$ to set $b_t=f_t(s_t, v_t)$.  The adversarial reward at time $t$ is given by the buyer's $t^{th}$-round utility, determined by the seller's mechanism's output, i.e., on making bid $b_t=f_t(s_t, v_t)$ in round $t$, the reward is given by buyer's utility 
\[u_t(b_t) := \Ex[v_t x_t - p_t | s_t, v_t, b_t]\] 
 For a no-regret learning buyer (refer to \eqref{eq:Regret}),{\it a behaviorally-optimal policy} is any bidding strategy such that for the  trajectories of bids, states and valuations $s_1, \val_1, b_1,\ldots, s_T, \val_T, b_T$, generated by this policy, we have in hindsight, 
\begin{equation}
\label{eq:no-regret}
\text{Regret}(T) = \max_{f\in {\cal E}} \sum_{t=1}^T u_t(f(s_t,v_t)) - \sum_{t=1}^T u_t(b_t) = o(T)
\end{equation}
Here, we slightly abused the notation to define $f$ as a function of $s_t,v_t$, where as technically it is only a function of $f(\indi(s_t=\bot), v_t)$, that is, it only uses whether $s_t=\bot$ or $s_t\ne \bot$.
\item {\it No-policy-regret learner:} This more sophisticated buyer uses a no-{\it policy-regret} learning algorithm. Following \eqref{eq:policyRegret1}, the important distinction from the definition of regret in the previous paragraph is that now the total utility of best expert must be evaluated over the {\it trajectory of states achieved by the expert}. To make explicit the dependence of $t^{th}$ round utility on past decisions through the state at time $t$, let us denote the utility in round $t$ as $u(b_t, s_t)$  
Then, following \eqref{eq:policyRegret1}, policy-regret of such a buyer is given by:
\begin{equation}
\label{eq:buyer-policy-regret}
\text{Policy-Regret}(T) = \max_{f\in {\cal E}} \sum_{t=1}^T u_t(f(s'_t,v_t), s'_t) - \sum_{t=1}^T u_t(b_t, s_t), 
\end{equation}
where $s'_1, \ldots, s'_T$ is the (possibly randomized) trajectory  of states that would be observed on using the expert to decide the bids in all rounds. For a no-policy-regret learning buyer, a behaviorally optimal policy is any bidding strategy such that for resulting trajectory of bids and states $s_1, b_1, \ldots, s_T b_T$, the above quantity is guaranteed to be $o(T)$.

 While constructing such a no-policy-regret learner is difficult in general, for our proposed stochastic state-based mechanisms and the above special case of single expert sequences, this is achievable by some simple learning strategies. In fact, a simple buyer learning strategy that will work for our mechanism to achieve no-policy-regret with high probability, is to explore each possible bid for some time and then use the best single bid for the rest of the time steps. 
\end{itemize}

\subsection{Desirable properties of a mechanism}
\label{sec:prelim-ir}
\label{sec:mechProp}
\paragraph{Non-payment forceful.}  A mechanism is {\em non-payment forceful} if the payment $\price_t$ is always non-negative, and $0$ when the bid is $0$, i.e.,
\begin{equation}
\label{eq:non-payment-forceful}
\price(\newstate_t,\bid_t,\alloc_t) \ge 0, \text{ and } \price(\newstate_t, 0,\alloc_t) = 0, \text{ for any } \newstate_t, \bid_t, \alloc_t.
\end{equation}
Such a mechanism has the desirable property that it cannot be forced to pay the buyer in any state and it cannot force payments out of buyers who bid $0$.
\paragraph{Interim Individually Rational (IIR)}  A mechanism is defined to be {\em interim individually rational (IIR)} 
iff at any time $t$, given any history and current valuation $\val_t$, there exists a bid $b_t$ such that the buyer's expected perceived utility is non-negative. For a $k$-lookahead buyer, this means that given any state $\newstate_t$, valuation $\val_t$, for any bid $b_t$ that is $k$-lookahead optimal at time $t$, we have 
 $$U_{\min\{k, T-t\}}^{t}(\newstate_t,\val_t,b_t) \geq 0.$$ 
Note that since the buyer's utility is a non-negative aggregate (possibly over multiple future time steps) of per time-step buyer utility, the non-payment-forceful condition
guarantees IIR.

An interim-individually rational mechanism allows take-it-or-leave-it offer based mechanisms like the following: ``pay $E[v]$ today to get the item tomorrow." A forward-looking buyer would find this offer attractive, hence the seller would extract the full
surplus, $\Ex[v]$. An unsettling feature of the afore-described mechanism is that, for some realizations of $v$, the
buyer ends up with negative utility. In particular, while this mechanism is interim Individually Rational (IR), it is not `ex-post IR'.
\paragraph{Ex-post IR}
This requires that the total utility of a rational buyer at the end of $T$ rounds is non-negative in hindsight.
 That is, there exists some behaviorally optimal policy, such that all trajectories of bids $b_t, t=1,\ldots, T$ (and implicitly the trajectories of states and valuations $s_1,\ldots, s_T$, $\val_1, \ldots, \val_T$) generated by this policy satisfy
$$\textstyle\sum_{t=1}^T u_t(b_t) \ge 0.$$ 
Here, $u_t(b_t) = \Ex_{\alloc_t \sim \alloc(\newstate_t,\bid_t)}[\alloc_t \cdot \val_t - \price_t(\newstate_t,\bid_t,\alloc_t) | \val_t, \newstate_t, b_t]$ is the round $t$ utility defined in the previous section as the buyer's value from that round's allocation minus the payment. 
\paragraph{Per-round Ex-post IR}
We also consider a stronger version of ex-post IR considered in some previous works \cite{ADH16, MPTZ16b}
: at each round, the utility of the buyer, defined to be his value from that period's allocation
minus the buyer's payment, must be non-negative. 
More precisely, we define a mechanism to be {\em per round ex-post individually rational}  
if there exists some behaviorally optimal policy such that every trajectory of bids $b_1, \ldots, b_T$ (and implicitly states $s_1,\ldots, s_T$ and valuations $\val_1, \ldots, \val_T$) generated by this policy satisfies
 \begin{align}
\forall t, u_t(b_t)  
\geq 0. \label{eq:ex post IR}
\end{align}
Per-round ex-post IR is a stronger requirement than ex-post IR, and clearly implies ex-post IR.  It is also referred to as stage-wise ex-post IR \cite{MPTZ16b}. 


\section{Our Mechanism}
\label{sec:our-mech}
We prove our positive results by proposing a simple state-based mechanism $\mthree$ parameterized by three parameters $\price$ (threshold price), $\epsilon$, and $\rho$. A high level description of the mechanism is as follows. The mechanism uses roughly the average of buyer's past accepted bids to decide the buyer's state. The mechanism can be in two types of states: `good state' if the current average is above $(1-\epsilon)\mu$,  and `bad state' otherwise. In a good state, the mechanism always accepts the buyer's bid (irrespective of the bid value), with payment equal to bid. The state transitions from a `good state' to a `bad state' if the average of accepted bids falls below $(1-\epsilon)\mu$. In a bad state, if the buyer's bid is above the threshold $\price$, then with probability $\rho$, the bid is accepted and the buyer is transferred to a good state. Any bid below $\price$ is rejected in a bad state.

Below are the precise definitions.
\begin{definition}[Mechanism $\mthree$] 

\begin{itemize}
\item {\bf State Space $\newstatespace$:} The state space is $\newstatespace=\R \times \mathbb{N}$. We represent a state $s \in \newstatespace$ by pair $(\avgbid,n)\in \R \times \mathbb{N}$, where $\bar{b}$ represents an average of $n$ past bids. We refer to states $s=(\avgbid, n)$ with 
$$\textstyle \avgbid \ge (1-\epsilon)\mu$$ 
as `good states', and all the other states as `bad states'. Abusing notation a little, if $s$ is a bad state we say $s = \bot$, otherwise, we say $s \neq \bot$. Further, we refer to any state of form $s_t = ((1-\epsilon)\mu, 0)$ as a `borderline' good state.
\item {\bf Starting state $s_1$:} The mechanism starts in a borderline good state, i.e., 
$$s_1=((1-\epsilon)\mu, 0).$$
\item  {\bf Allocation rule $x(s_t, b_t)$:} Given current state $s_t$, and bid $b_t$, this mechanism always allocates in a good state. In a bad state, it allocates with  probability $\rho$ {\it if} the bid $b_t$ is above the price $\price$. That is, $x_t \sim x(s_t, b_t)$, where
$$x(s_t, b_t)= 
\left\{ \begin{array}{ll}
1, & \text{ if } s_t \ne \bot,\\
\text{Bernoulli}(\rho), & \text{ if } s_t = \bot \text{ and } b_t \ge \price,\\
0, & \text{ otherwise.}
\end{array}\right.
$$
\item {\bf Payment rule $p(s_t, b_t, x_t)$:} This is a first price mechanism, i.e.,
$$p(s_t, b_t, x_t) = 
\left\{ \begin{array}{ll}
b_t, & \text{ if } x_t =1,\\
0, & \text{otherwise.}
\end{array}\right.
$$
Note that the payment is always smaller than bid, with $0$ payment no allocation. By definition, this mechanism is non-payment forceful in all states.
\item {\bf State-transition function $Q(s_t, b_t, x_t, p_t)$:} $Q(s_t, b_t, x_t, p_t)$ provides the distribution of next state $s_{t+1}$. Let $s_t=(\avgbid, n)$. In this mechanism, the state effectively remains the same if $x_t=0$. Otherwise, it transitions either (from good state) to a state with updated average bid, or (from bad state) to a borderline state.
$$s_{t+1}= 
\left\{ \begin{array}{ll}
(\avgbid, n), & \text{ if } x_t =0\\
\left(\frac{\avgbid n+b_t}{n+1}, n+1\right), & \text{ if } s_t \ne \bot, x_t=1\\
((1-\epsilon)\mu, 0), & \text{ if } s_t = \bot,  x_t=1,\\
\end{array}\right.
$$
\end{itemize}
\end{definition}

\section{Revenue tradeoffs} \label{sec:revenue tradeoffs}
In this section, we prove revenue guarantees for mechanism $\mthree$ against buyers with different lookahead attitudes and learning behaviors. Specifically, we demonstrate that with appropriate parameter settings, this mechanism achieves the results stated in Theorem \ref{th:1} and Theorem \ref{th:2}. 

In below, $\mpr := \arg \max_{\price}\price(1-F(\price))$ and $\revmye := \mpr(1-F(\mpr))$ denote the Myerson price and Myerson optimal revenue respectively. And, $\rev^{\mthree}$ denotes the expected value of average revenue of mechanism $\mthree$ over $T$ rounds, i.e., $\rev^{\mthree} := \frac{1}{T} \sum_{t=1}^T \price_t$. Order notation $o(1)$ and $o(T)$ will be used denote asymptotic order with respect to $T$.

\subsection{Revenue against diverse lookahead attitudes}
Theorem \ref{th:1} can be obtained as a corollary of the following proposition by substituting $\rho =\frac{\epsilon}{2-\epsilon}$.
\begin{proposition}
\label{pp:lookahead}
The mechanism $\mthree$ with parameters $p=\mpr, \rho\le \frac{\epsilon}{2-\epsilon}$ 
achieves 
\begin{enumerate}
\item[(a)] revenue $\rev^{\mthree} \ge \frac{\rho}{\rho+1} \revmye - o(1)$ against myopic buyers, while achieving 
\item[(b)] revenue $\rev^{\mthree} \ge (1-\epsilon)\mu - o(1)$ against $k$-lookahead buyers for any $k\ge 1$. 
\end{enumerate}
\end{proposition}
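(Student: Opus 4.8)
The plan is to analyze the two buyer types separately, exploiting the structure of the state space of $\mthree$ with $\price = \mpr$.

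\textbf{Part (b), $k$-lookahead buyers with $k\ge 1$.} The key observation is that in a good state the mechanism always allocates and charges the bid, while in a bad state a $1$-lookahead buyer can, by bidding $\mpr$, win with probability $\rho$ and be \emph{carried back to a borderline good state} for the next round. I would argue that a behaviorally-optimal $k$-lookahead buyer (for any $k\ge 1$) never wants to linger in a bad state: the benefit of returning to a good state — where he can bid arbitrarily small amounts and still be allocated — dominates. More precisely, I would show that in a good state the unique (up to measure-zero) $k$-lookahead optimal behavior, once the running average is safely above $(1-\epsilon)\mu$, is to bid just enough to keep the average from dropping below $(1-\epsilon)\mu$; and that from a bad state it is optimal to bid $\mpr$ to escape. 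The consequence is that, except for an $o(1)$ fraction of rounds (an initial transient, plus the rare rounds spent in bad states whose frequency is $O(1/T)$-ish because escape happens in one step with probability $\rho$), the running average of accepted bids stays pinned near $(1-\epsilon)\mu$, so the seller collects at least $(1-\epsilon)\mu - o(1)$ per round on average. The cleanest way to make this rigorous is to lower-bound the \emph{total} accepted payments by $n_T \cdot (1-\epsilon)\mu$ where $n_T$ is the number of accepted rounds, and separately show $T - n_T = o(T)$ (the buyer is allocated in almost every round, since he either is in a good state and always wins, or escapes a bad state quickly).

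\textbf{Part (a), myopic buyers.} A myopic buyer maximizes only $U_0^t = \Ex[v_t x_t - p_t \mid s_t, v_t]$. In a good state, since the item is allocated no matter what and payment equals the bid, the myopic best response is to bid $0$ (or infinitesimally), paying nothing; this pushes the running average down, and after finitely many such rounds the state becomes bad. In a bad state, the myopic buyer faces exactly a posted price of $\mpr$ with allocation probability $\rho$: bidding $\mpr$ yields expected utility $\rho(v_t - \mpr)$, bidding below $\mpr$ yields $0$. So he bids $\mpr$ iff $v_t \ge \mpr$, which happens with probability $1 - F(\mpr)$, and when he does the seller gets $\rho\,\mpr$ and the buyer is kicked to a good state. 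Thus the dynamics settle into a cycle: one accepted bid of $\mpr$ (in expectation, after $1/(1-F(\mpr))$ rounds of trying), then exactly one "free" round in the good state, then back to bad. I would formalize this by computing the long-run fraction of rounds in which payment $\mpr$ is collected: in steady state, out of every $1 + \tfrac{1}{\rho(1-F(\mpr))}$ rounds (one good round, then a geometric number of bad rounds until escape), the seller collects $\mpr$ exactly $\tfrac{1}{\rho}$ times in expectation — wait, more carefully, each bad-state visit contributes expected revenue $\rho\mpr$ and lasts one round, and each such visit is preceded by exactly one good round of zero revenue. Writing $q := \rho(1 - F(\mpr))$ for the per-round escape probability from a bad state, a bad-state spell lasts $1/q$ rounds in expectation and produces $\mpr$ in revenue (exactly one accepted bid, worth $\mpr$); the following good spell lasts $1$ round with $0$ revenue. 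Hence average revenue is $\frac{\mpr}{1/q + 1} = \frac{q\,\mpr}{1+q} \ge \frac{\rho(1-F(\mpr))\,\mpr}{1+\rho} \ge \frac{\rho}{1+\rho}\revmye$, using $\mpr(1-F(\mpr)) = \revmye$ and $1 + \rho(1-F(\mpr)) \le 1 + \rho$. A standard concentration / renewal-reward argument upgrades the expectation to hold up to $o(1)$ for the finite-horizon average.

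\textbf{Main obstacles.} The delicate point in part (b) is handling the buyer's indifference and the non-existence of maximizers: in a good state with a large margin, the buyer is indifferent among many small bids, so I must argue that \emph{every} behaviorally-optimal policy keeps the average near $(1-\epsilon)\mu$ — equivalently, show that any policy driving the average strictly higher is strictly suboptimal, and any policy driving it below $(1-\epsilon)\mu$ is also suboptimal because the resulting bad-state detour costs at least $(1-\rho)$ in expected missed surplus while saving at most the overpayment. Bounding the transient correctly (the averaging window $n$ keeps growing, so early bids have diminishing influence and an initial over/under-shoot washes out as $o(1)$) also requires care but is routine. For part (a) the only subtlety is the renewal argument near the horizon $T$, which contributes an $o(1)$ boundary term. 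I expect part (b)'s indifference-resolution argument to be the real crux.
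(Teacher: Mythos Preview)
Your treatment of part~(a) is correct and matches the paper's renewal argument essentially step for step.

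Part~(b), however, has a genuine gap. You correctly sense that the revenue bound rests on the claim ``a $k$-lookahead buyer in a good state never bids low enough to fall into a bad state,'' but your justification for this claim is a one-step heuristic (``the bad-state detour costs at least $(1-\rho)$ in missed surplus while saving at most the overpayment'') that only handles $k=1$ and only yields the condition $\rho<\epsilon$. You never invoke the hypothesis $\rho\le \epsilon/(2-\epsilon)$, and that stronger condition is exactly what is needed for general $k$. The difficulty is that a $k$-lookahead buyer who bids $0$ today knows he will return to a borderline good state after a random number of bad rounds, so the comparison is between (i) paying $(1-\epsilon)\mu$ per round and staying good, versus (ii) saving $(1-\epsilon)\mu$ once, collecting at most $\rho\mu$ per round during a random bad spell, then resuming from borderline. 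The paper establishes this as its central lemma via a coupling argument: couple the deviating trajectory with an alternative that bids $(1-\epsilon)\mu$ until the deviating trajectory first re-enters a good state, then plays optimally; compare utilities up to the random coupling time via Wald's identity and a stochastic-domination bound on the return time by a geometric variable; and handle the post-coupling continuation by induction on $k$ together with a monotonicity claim (optimal continuation utility from any good state dominates that from the borderline state). The final algebra is precisely where $\rho\le \epsilon/(2-\epsilon)$, rather than merely $\rho<\epsilon$, is required.

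You also misidentify the crux. Once the ``never-leave-good'' lemma is in hand, there is no transient, no indifference issue, and no bad-state visits to bound: by definition of good state the running bid average is at least $(1-\epsilon)\mu$ through round $T-1$ (the buyer becomes myopic only in the final round), so total payment is at least $(T-1)(1-\epsilon)\mu$ and the $o(1)$ is just $O(1/T)$. Your plan to allow occasional bad-state excursions and argue they have $o(1)$ frequency is both unnecessary and fragile, since each excursion resets the averaging window to $n=0$, so ``few bad rounds'' alone would not control the revenue lost in the first round of each subsequent good stretch.
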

\begin{proofsketch}
Here, we provide a proof sketch. A detailed proof is provided in Appendix \ref{app:lookahead}.

\noindent (a) {\it Revenue against myopic buyers.} 
Let us first consider the seller's expected revenue from a myopic buyer over $T$ rounds. By definition, the mechanism starts in the borderline good state $s_0 = ((1-\epsilon)\mu, 0) \ne \bot$. Since a myopic buyer is only concerned with immediate utility, she will bid $0$ in this state to get allocation with maximum possible utility. However, this will take the average of bids below the boundary of $(1-\epsilon)\mu$ and the bidder will immediately go to the bad state. In a bad state, such a buyer will bid $\mpr$ whenever $\val_t \ge \mpr$. Therefore, the buyer will return back to the borderline good state $s_0$ with probability $\rho \Pr(\val_t \ge \mpr) = \rho (1-F(\mpr))$. Again, in $s_0$, the myopic bidder will bid $0$ and immediately transfer back to a bad state. Therefore, the buyer will roughly spend $\ell=\frac{1}{\rho (1-F(\mpr))}$ time steps in bad state for every visit to a good state. The expected number of such good state-bad state visit cycles is roughly $T/(\ell+1)$, with seller's revenue of $\mpr$ (from bad states) in every cycle. This givens the expected revenue in $T$ rounds as roughly $\frac{T\mpr}{\ell+1} \ge \frac{\rho}{\rho+1} \mpr (1-F(\mpr)) = \alpha \revmye$. An extra $-\frac{1}{T} = -o(1)$ term is obtained due to possible interruption of the last cycle.\\
(b) {\it Revenue against $k$-lookahead buyers.} For $k$-lookahead buyers for any $k\ge 1$, the key is to demonstrate that in mechanism $\mthree$ with $\rho<\frac{\epsilon}{2}$, an optimal $k$-lookahead bid in a good state guarantees that the next state is also a good state. This is proven in Lemma \ref{lem:alwaysgood}. The bound on revenue can then be obtained by observing that starting in a good state, a $k$-lookahead buyer will remain continuously in good states. And, since the bidder always gets charged her bid in a good state, this implies the seller's average revenue is $\bar{b}_T$ which by definition of good states is at least $\mu(1-\epsilon)$. Here $\bar{b}_T$ denotes the average of bids $b_1, \ldots, b_T$. 
\end{proofsketch}

The following lemma forms the key to main technical results in this paper.
\begin{lemma}
\label{lem:alwaysgood}
Assume $\rho\le \frac{\epsilon}{2-\epsilon}$. Then, at any time $t$, given that $s_t \ne \bot$,  and $b_t$ is a $k$-lookahead optimal bid, we have that $s_{t+1} \ne \bot$.
\end{lemma}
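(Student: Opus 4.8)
The plan is to collapse each good state to a single scalar, reduce the lemma to a one‑line comparison between ``staying good'' and ``falling to bad'', and then close that comparison by a monotone induction on the lookahead depth.

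\paragraph{Step 1: reserve reparametrization and reduction.} For a good state $s_t=(\avgbid,n)$ I would call $R_t:=n(\avgbid-(1-\epsilon)\mu)\ge 0$ its \emph{reserve}. Since in a good state every bid is accepted at first price, a bid $b$ sends the reserve to $R_t+b-(1-\epsilon)\mu$, and $s_{t+1}$ is good iff this is $\ge 0$; an easy induction on the horizon shows that the optimal $\ell$-lookahead utility out of a good state depends on $(\avgbid,n)$ only through $R_t$, and out of a bad state not at all. Call these $\Phi_\ell(R)$ and $B_\ell$. Set $\ell:=\min\{k,T-t\}$; I may assume $\ell\ge1$ (if $\ell=0$ then $t=T$ and $s_{t+1}$ is never consulted). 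If $R_t\ge(1-\epsilon)\mu$ every bid leaves $s_{t+1}$ good, so assume $0\le R_t<(1-\epsilon)\mu$ and let $b^{\min}:=(1-\epsilon)\mu-R_t\in(0,(1-\epsilon)\mu]$ be the least bid that stays good. Because a below-$b^{\min}$ bid still wins the current good, the best ``drop to bad'' option is to bid $0$, worth $v_t+B_{\ell-1}$, whereas bidding $b^{\min}$ lands on reserve $0$, worth $v_t-b^{\min}+\Phi_{\ell-1}(0)$. Hence the lemma follows once we show
\[
D_j:=\Phi_j(0)-B_j\;>\;(1-\epsilon)\mu\qquad\text{for all }j\ge0,
\]
since then $b^{\min}$ strictly beats every drop-to-bad bid, so no such bid is $k$-lookahead optimal and $s_{t+1}$ is good.

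\paragraph{Step 2: the two recursions and the map $g$.} From a bad state, any bid below $\price$ gives $0$ this round and keeps the (equivalent) bad state, whereas bidding $\price$ — which weakly dominates larger bids, since only winning and landing in the fixed reserve-$0$ good state matters — gives $\rho\big((v-\price)+\Phi_{j-1}(0)\big)+(1-\rho)B_{j-1}$; taking the better of the two and integrating $v\sim F$ yields $B_j=B_{j-1}+\rho\,\Ex[(v-\price+D_{j-1})^+]$ with $B_0=\rho\,\Ex[(v-\price)^+]$. On the good side, bidding $(1-\epsilon)\mu$ out of reserve $0$ keeps the reserve at $0$, so $\Phi_j(0)\ge\epsilon\mu+\Phi_{j-1}(0)$ with $\Phi_0(0)=\mu$. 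Subtracting, $D_j\ge g(D_{j-1})$ where $g(x):=x+\epsilon\mu-\rho\,\Ex[(v-\price+x)^+]$.

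\paragraph{Step 3: monotone iteration.} Because $t\mapsto(v-\price+t)^+$ has slope in $[0,1]$, $g$ is nondecreasing with $g(x_2)-g(x_1)\ge(1-\rho)(x_2-x_1)$; moreover $g((1-\epsilon)\mu)-(1-\epsilon)\mu=\epsilon\mu-\rho\,\Ex[(v-\price+(1-\epsilon)\mu)^+]\ge\epsilon\mu-\rho\,\Ex[v+(1-\epsilon)\mu]=\mu\big(\epsilon-\rho(2-\epsilon)\big)\ge0$, using $v,\price\ge0$ and \emph{exactly} the hypothesis $\rho\le\epsilon/(2-\epsilon)$. Finally $D_0=\mu-\rho\,\Ex[(v-\price)^+]\ge(1-\rho)\mu>(1-\epsilon)\mu$ since $\rho\le\epsilon/(2-\epsilon)<\epsilon$. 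Hence $D_{j-1}>(1-\epsilon)\mu$ forces $D_j\ge g(D_{j-1})>g((1-\epsilon)\mu)\ge(1-\epsilon)\mu$, so $D_j>(1-\epsilon)\mu$ for all $j$, which completes the argument.

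\paragraph{Main obstacle.} The delicate point is recognizing that $D_j$ is the quantity to track and that it obeys $D_j\ge g(D_{j-1})$ with $(1-\epsilon)\mu$ a stable lower fixed point precisely at the threshold $\rho=\epsilon/(2-\epsilon)$; trying to control $B_j$ or $\Phi_j(0)$ separately is hopeless because each of them grows linearly in the remaining horizon, so only their difference is well behaved. Setting up the reserve reparametrization, and verifying that the value functions genuinely only see the reserve (respectively nothing), is routine but must be carried out carefully.
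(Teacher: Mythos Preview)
Your argument is correct and takes a genuinely different route from the paper. The paper proceeds by a coupling argument: assuming an optimal bid $b_t$ drops the buyer to a bad state, it defines the random stopping time $\tau$ at which the optimal trajectory first re-enters a good state, couples this trajectory to an alternative one that bids $(1-\epsilon)\mu$ through step $t+\tau$, invokes a separate claim that any good state dominates the borderline good state to align the continuations, and then compares the two utilities on $[t,t+\tau]$ using Wald's identity and geometric tail bounds on $\tau$. Your approach dispenses with all of this: you collapse the good states to the scalar reserve, track the single difference $D_j=\Phi_j(0)-B_j$, and observe that $D_j\ge g(D_{j-1})$ for a nondecreasing $g$ with $(1-\epsilon)\mu$ a sub-fixed-point exactly at the hypothesis $\rho\le\epsilon/(2-\epsilon)$. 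This is more elementary (no coupling, no stopping times, no martingales) and makes the role of the threshold transparent as the fixed-point condition $g((1-\epsilon)\mu)\ge(1-\epsilon)\mu$; the paper's route, by contrast, makes the probabilistic structure of excursions to the bad state explicit, which is conceptually illuminating even if machinery-heavier. Both arguments discard the price $p$ at the same place (you via $(v-p+(1-\epsilon)\mu)^+\le v+(1-\epsilon)\mu$, the paper via the upper bound $v_{t+i}\mathbbm{1}(A_{t+i})$), so neither is sharper than the other; they simply organize the same slack differently.
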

\begin{proofsketch}
To develop an intuition, let us first consider a simpler version of our mechanism where at every step in the bad state, there is an independent probability $\rho$ of getting an allocation and transferring back to the good state, irrespective of the bid. (Note that this mechanism, that does not charge anything in bad states, will result in $0$ seller's revenue against myopic buyers). 

Assume that current state $s_t$ is a good state. Suppose for contradiction that $b$ is an optimal bid at time $t$ such that $s_{t+1}= \bot$. Let us compare the $k$-lookahead utility of this bid to a bid $b' \le (1-\epsilon)\mu$ that keeps the bidder in a  good state $s_{t+1}$ (note that such a bid always exists). The comparison for $1$-lookahead buyer is relatively simple. In both cases, such a buyer will plan to bid $0$ in the next round (since next round is the last round in $1$-lookahead buyer's planning horizon). When $s_{t+1}\ne \bot$, $x_{t+1}=1$, and therefore, $1$-lookahead utility for $b'$ is at least
\begin{center} $v_t-(1-\epsilon)\mu_t + \Ex[\val_{t+1}] = v_t+\epsilon\mu$. \end{center}
In comparison, when $s_{t+1}=\bot$, the probability of allocation is $\rho$, and therefore $1$-lookahead utility for $b$ is  at most 
\begin{center} $v_t + \rho \Ex[\val_{t+1}] = v_t+\rho\mu$. \end{center}
Therefore, if $\rho<\epsilon$, $b'$ is clearly a better choice than the optimal bid $b$, thus giving a contradiction. For $k$-lookahead buyers, extending this argument involves coupling the $k$-length trajectories of two bidding strategies: one that plays optimal bids starting from bid $b$ in $s_t$ vs. a strategy that starts from bid $b'$ in state $s_t$. By coupling these two trajectories, we aim to compare the $k$-lookahead utilities of the two bids, in order to demonstrate that playing $b$ in the state $s_t$ is strictly suboptimal. 

 The coupling point of the two trajectories is the (random) time step $t+\tau+1$ where the (supposedly) optimal strategy's trajectory (that started with $b$) comes back to a (borderline) good state for the first time. Define the second bidding strategy in a coupled manner to play bid $b'=(1-\epsilon)\mu$ repeatedly until $t+\tau$, and optimal bids thereafter. Then, at step $t+\tau+1$, the second trajectory is also in a good state, but possibly not borderline. To relate the utilities of the two trajectories after this coupling point, we use induction along with an observation that the optimal utility for a trajectory starting at any good state is at least as good as the utility for a trajectory starting at a borderline good state (refer to Claim \ref{claim:border} in appendix). Therefore, from point $t+\tau+1$ onwards, the second trajectory has better, or at least as good utility, as the first supposedly optimal strategy. It remains to compare the utilities in steps $t, t+1, \ldots, t+\tau$. 

By definition of $\tau$, under first supposedly optimal strategy, the states at time steps $t+1, \ldots, t+\tau$ are all bad states.  Therefore, the total expected utility is at most
\begin{center}$\val_t + \tau \rho \mu$ \end{center}
On the other hand, the second strategy of bidding $(1-\epsilon)\mu$ always stays in good state to get total expected utility of at least $\val_t -(1-\epsilon)\mu + \tau \epsilon \mu$ in these steps.
A better bound for the second case can be observed by considering the probability that the `last' (i.e., $k^{th}$) round occurs at the end of this interval, that is, $\tau=k$. At the last round, the second strategy bids $0$ to obtain full valuation as utility. Therefore, a better lower bound for expected utility is 
\begin{center} $\val_t -(1-\epsilon)\mu + (\tau-1)\epsilon \mu + \Pr(\tau=k) \mu$ \end{center}
Here $\Pr(\tau=k) = (1-\rho)^{k-1}\rho$. Comparing this with the utility of the first strategy, after some careful algebraic manipulations we obtain that under condition $\rho\le \epsilon/(2-\epsilon)$, the utility for second strategy is better than optimal; this gives a contradiction. 

Now, lets consider the more complicated allocation rule actually used by $\mthree$ in bad state: the allocation in bad state actually depends on the bid, and the bidder gets an allocation with probability $\rho$ only if the bid is above $p$. Although this makes the payoff for bidder in bad states worse, and provides more incentive to remain in good state; it also makes analysis trickier since now the event of a bidder transferring from bad state to good state depends on her bid (and therefore her valuation) in addition to the random $\rho$ probability event. 	

Finally, all the coupling arguments and expected revenue analysis sketched above need to be carried out carefully using martingale analysis and stopping times. The details are provided in Appendix \ref{app:lookahead}.
\end{proofsketch}

\subsection{Revenue against buyer learning behaviors}
Theorem \ref{th:2} can be obtained as a corollary of the following proposition, by substituting $\rho=\epsilon - o(1)$.
\begin{proposition}
\label{pp:learning}
The mechanism $\mthree$ with $\rho < \epsilon$ and $p=\mpr$ achieves a revenue of at least
\begin{enumerate}
\item[(a)] $\rev^{\mthree} \ge \frac{\rho}{\rho+1} \revmye - o(1)$  against any 
buyer who is a no-regret learner for the class ${\cal E}$ of experts (refer to 
Equation~\eqref{eq:experts} and~\eqref{eq:no-regret}), and 
\item[(b)] $\rev^{\mthree} \ge (1-\epsilon)\mu - o(1)$ against any buyer who is a policy regret learner for a class ${\cal C}$ of sequences containing all sequences of single experts (refer to Equation \eqref{eq:buyer-policy-regret}).
\end{enumerate}
\end{proposition}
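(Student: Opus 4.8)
The plan is to handle the two learning behaviors separately, with part (b) reducing almost immediately to the lookahead analysis and part (a) requiring a new but short argument tailored to the no-regret (simple-regret) setting. For part (b), the key observation is that a no-policy-regret learner for a class containing all single-expert sequences must do at least as well (up to $o(T)$) as the best single fixed expert \emph{evaluated along the trajectory that expert itself induces}. Consider the expert $f^\star \in {\cal E}$ that bids $(1-\epsilon)\mu$ whenever $v_t \ge (1-\epsilon)\mu$ and bids $v_t$ otherwise (or any expert whose bids never drop the running average below $(1-\epsilon)\mu$); since the mechanism starts in a borderline good state $s_1 = ((1-\epsilon)\mu,0)$, this expert keeps the buyer in good states forever, always gets allocated, and yields per-round utility $v_t - b_t \ge 0$ with total utility at least $\sum_t (v_t - (1-\epsilon)\mu) = \Theta(T)$ in expectation (in fact $\ge T\epsilon\mu - o(T)$ by concentration of $\sum v_t$). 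By the no-policy-regret guarantee \eqref{eq:buyer-policy-regret}, the buyer's own trajectory therefore has total utility at least $T\epsilon\mu - o(T)$. But the buyer's total utility is $\sum_t (v_t x_t - p_t) = \sum_t v_t x_t - T\cdot\rev^{\mthree}$, and $\sum_t v_t x_t \le \sum_t v_t = T\mu + o(T)$ with high probability, so $T\cdot\rev^{\mthree} \le T\mu - T\epsilon\mu + o(T)$ — wait, that gives an upper bound; I need the lower bound on revenue, so instead I argue directly that the buyer's behaviorally-optimal trajectory cannot repeatedly sit in bad states. The cleaner route: show that any policy achieving $o(T)$ policy regret must spend all but $o(T)$ rounds in good states (otherwise it loses $\Theta(1)$ utility per bad round relative to $f^\star$, contradicting the regret bound once $\rho < \epsilon$ via the same per-round comparison $v_t + \rho\mu$ vs.\ $v_t - (1-\epsilon)\mu + \mu$ used in Lemma \ref{lem:alwaysgood}); and in every good round the buyer pays her bid $b_t$, while the running-average invariant forces $\bar b_T \ge (1-\epsilon)\mu$, hence $\rev^{\mthree} = \bar b_T - o(1) \ge (1-\epsilon)\mu - o(1)$.

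For part (a), the no-regret (simple-regret) learner only competes against fixed experts evaluated along the \emph{realized} state trajectory, so the strong argument above fails — a fixed expert that bids high would, along the buyer's actual low-bid trajectory, simply be charged its high bid in whatever states actually occur and accrue low utility, so it is not a useful benchmark. Instead I mirror the myopic analysis from Proposition \ref{pp:lookahead}(a). The relevant competing expert is $f_0$: bid $0$ in good states, bid $\mpr$ in bad states when $v_t \ge \mpr$ and $0$ otherwise. Along \emph{any} realized trajectory of states $s_1,\dots,s_T$, this expert's per-round utility is $v_t$ in good states (free allocation) and, in bad states, $\rho(v_t - \mpr)^+$ in expectation. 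By the no-regret guarantee, the buyer's realized total utility is at least $\sum_t u_t(f_0(s_t,v_t)) - o(T)$. On the other hand, the buyer's total utility equals $\sum_t v_t x_t - T\rev^{\mthree}$. The argument then is: if the seller's revenue were much below $\frac{\rho}{\rho+1}\revmye$, then along the realized trajectory the buyer must have been extracting more utility than $f_0$ can, which is impossible because $f_0$ is already the utility-maximizing response \emph{within ${\cal E}$ restricted to each realized state} — more precisely, bidding $0$ in good states is utility-optimal there, and $\mpr$ (when $v_t\ge\mpr$) maximizes the bad-state expected utility $\rho(v_t-b)^+\indi(b\ge\mpr)$ over the one-shot choice, so $u_t(b_t) \le u_t(f_0(s_t,v_t))$ up to the regret slack, forcing $\sum_t v_t x_t - T\rev^{\mthree} \le \sum_t u_t(f_0(s_t,v_t)) + o(T)$, i.e.\ $T\rev^{\mthree} \ge \sum_t [v_t x_t - u_t(f_0(s_t,v_t))] - o(T) = \sum_t p_t$ trivially. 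So the real content is a \emph{counting} argument on good vs.\ bad rounds: let $G$ be the number of good rounds and $B = T - G$ the bad rounds. In good rounds revenue is $b_t$ (and the running-average invariant gives total good-round revenue $\ge G(1-\epsilon)\mu - (\text{correction})$ only if the buyer ever passes through, but a no-regret buyer bids near $0$ in good states to grab utility); in bad rounds expected revenue is $\rho\mpr$ per round in which $v_t\ge\mpr$. Since each good round is entered only via a bad-round success (probability $\rho(1-F(\mpr))$ per qualifying bad round), a renewal/Wald argument gives $\E[G] \le \rho(1-F(\mpr))\,\E[B] + 1$, hence $\E[B] \ge (T-1)/(1+\rho(1-F(\mpr)))$ and expected bad-round revenue $\ge \rho\mpr(1-F(\mpr))\cdot\E[B]/T \ge \frac{\rho}{1+\rho(1-F(\mpr))}\revmye \ge \frac{\rho}{\rho+1}\revmye - o(1)$, exactly as in part (a) of Proposition \ref{pp:lookahead}.

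The main obstacle is making rigorous, in the no-regret case, the claim that the learner is effectively forced to behave like the myopic buyer $f_0$ — i.e.\ that no-regret against the expert class ${\cal E}$ pins down the per-round bidding behavior state-by-state tightly enough to run the renewal argument. One must be careful that ${\cal E}$ contains $f_0$, that the single-round utility comparisons used to identify $f_0$ as state-wise optimal respect the discretization $\bar V$ (so $\mpr$ must be, or be approximated within $\bar V$, which only costs $o(1)$ revenue), and that the $o(T)$ regret slack translates to an $o(1)$ per-round revenue loss after dividing by $T$. A secondary technical point is controlling the stochastic fluctuations of the good/bad renewal process — the number of cycles, the last incomplete cycle, and the event $v_t \ge \mpr$ — which is handled by the same martingale/stopping-time bookkeeping invoked at the end of the proof sketch of Lemma \ref{lem:alwaysgood}, yielding the stated $-o(1)$ terms; the details parallel Appendix \ref{app:lookahead} closely enough that no genuinely new concentration inequality is needed.
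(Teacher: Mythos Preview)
Your part (b) is essentially the paper's argument: the constant expert bidding $(1-\epsilon)\mu$ stays in good states forever and earns per-round utility $\epsilon\mu$; since bad rounds yield at most $\rho\mu$ in expectation and $\rho<\epsilon$, a no-policy-regret learner can have only $o(T)$ bad rounds, and the running-average constraint on the remaining good rounds gives revenue $(1-\epsilon)\mu - o(1)$.

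Your part (a) has a genuine gap in the counting step. You write ``each good round is entered only via a bad-round success'' and conclude $\E[G] \le \rho(1-F(\mpr))\E[B]+1$. This is false as stated: a good round can be entered from \emph{another good round}, so what a bad-round success actually bounds is the number of bad-to-good \emph{transitions}, i.e.\ the number of runs of good states, not $G$ itself. Your renewal bound therefore only holds once you have separately shown that runs of consecutive good states have total length $o(T)$. The paper supplies exactly this missing piece: in any run of consecutive good states, the bids over the ``trailing'' good rounds (all but the first in the run) must average at least $(1-\epsilon)\mu$ to keep the state good, so the buyer's utility there is at most $\bar v-(1-\epsilon)\mu$; meanwhile the benchmark expert $f_0$ (bidding $0$ in good states) gets $\bar v$ on those same rounds. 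Since $f_0$ is also pointwise optimal in bad rounds and in leading good rounds, the no-regret constraint forces the number of trailing good rounds to be $o(T)$. Only then does the bound $\E[G]\le \rho\E[B]+1+o(T)$ hold, and the rest of your counting goes through.

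A smaller point: even after fixing the above, the correct transition bound is $\E[G]\le \rho\E[B]+1$ (each bad-to-good transition needs only the Bernoulli$(\rho)$ success, regardless of $v_t$), not $\rho(1-F(\mpr))\E[B]+1$; the factor $(1-F(\mpr))$ enters separately when you lower-bound the number of bad rounds in which the buyer actually bids $\ge \mpr$, via the same no-regret comparison to $f_0$. You flag this last step as ``the main obstacle,'' which is fair, but note it is the \emph{same} no-regret-vs-$f_0$ comparison that handles the trailing-good-states issue, so once you set that comparison up carefully both pieces fall out together.
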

\begin{proof} 

\noindent (a) {\bf No-regret learners.} Consider a bidding function $f(s_t, v_t)$ defined as $f(s_t, v_t)=p^*$ when $s_t = \bot$ and $v_t\ge \mpr$, and $0$ otherwise. This is (arbitrarily close to) one of the experts in the class ${\cal E}$ of experts that the buyer is using.  In the mechanism $\mthree$, with probability $\rho$, the bid $\mpr$ made in a bad state will get accepted, to earn utility $u_t(f(s_t, v_t)) = \rho(v_t-\mpr)$ for the buyer. Therefore, using the above expert as benchmark, we obtain that for any sequence of states and valuations, the first term in the regret definition \eqref{eq:no-regret} is at least 
$$\textstyle \sum_{t=1}^T u_t(f(s_t,v_t)) \ge  \rho \sum_{t:s_t = \bot, v_t\ge \mpr} (v_t -\mpr)  + \sum_{t:s_t\ne \bot}v_t$$
Since the buyer is using a no-regret learning algorithm, she must be achieving a utility that is within $o(T)$ of the above utility. Now, the maximum utility achievable in any good state is $\val_t$. And, the buyer cannot make any positive utility in a bad state at time $t$ if $\val_t \le \mpr$. Therefore, a no-regret learning buyer cannot afford to lose more than $o(T)$ of the bad state auctions with $v_t > \mpr$. 
This means that any no-regret learning buyer must bid $b_t \ge \mpr$ in all but possibly $o(T)$ of the rounds $t$ with $s_t= \bot, \val_t\ge \mpr$. Let $B$ be a random  variable denoting the number of bad states in the state trajectory, and let $B'$ denote the number of those bad states with $v_t\ge \mpr, b_t \ge \mpr$, then,
$$\Ex[B'] \ge \Ex[B](1-F(\mpr))-o(T)$$
Also, let $G$ be the number of good states. 

To lower bound $\Ex[B]$, let us first consider state trajectories of form 
\begin{center} \mbox{$\not\perp\bot\bot\bot\bot\bot\not\perp\bot\bot\bot\not\perp\bot\bot\ldots$},\end{center}
 i.e., lone good states interspersed with sequences of bad states. 
 Then, due to the construction of the mechanism $\mthree$, the only way the buyer can obtain $G-1$ good states (all except the first good state), only by winning $G-1$ bad state auctions. And, since the bad state auctions can be won only with probability $\rho$, we have that $\Ex[G] \le \Ex[B]\rho + 1$. This gives,
$$\textstyle \Ex[B] \ge \frac{1}{\rho+1} (T-1),$$
Combining the above observations,
$$ \textstyle \Ex[B'] \ge \Ex[B](1-F(\mpr)) - o(T)\ge \frac{1}{\rho+1} (1-F(\mpr)) (T-1) -o(T).$$
Therefore, the total expected seller's revenue is at least 
$$\rho \mpr \Ex[B'] \ge \frac{\rho}{\rho+1} \mpr (1-F(\mpr)) (T-1) -o(T)= \frac{\rho}{\rho+1}  \revmye T -o(T)$$
Now, consider sequences of states with more than one consecutive good states, e.g., 
\begin{center}
$\not\perp \bot\bot\bot\bot \not\perp\not\perp\not\perp \bot\bot\bot\bot \not\perp\not\perp\ldots$ etc. 
\end{center}
Then, in any sub-sequence of consecutive good states, the bid average over all good states except the first one (call them trailing states) must be at least $(1-\epsilon)\mu$, so that the buyer makes at most $\bar{v}-(1-\epsilon) \mu$ utility on average, where $\bar{v}$ denotes the average valuation over the trailing states. On the other hand, the above expert $f$, which bids $0$ in good states, makes an average of $\bar{v}$ utility in those trailing states (in hindsight). Further, in the bad states, and in the rest of the (non-trailing) good states, $f$ is achieving the best possible utility. Therefore, given the no-regret condition, the number of trailing states can be at most $o(T)$ and do not effect the revenue calculations above.

\noindent (b) {\bf No-policy-regret learners.}
Here, we use the sequence of constant bids $(1-\epsilon)\mu$ as a benchmark expert. This bidding strategy ensures that the mechanism is always in a good state $s_t = ((1-\epsilon)\mu, t)$, and achieves a utility of $\epsilon \mu$. Therefore, since the class ${\cal C}$ contains such sequences of single experts, the policy-regret learning buyer must achieve at least $\epsilon \mu -o(T)$ utility. Now, in bad states, the buyer can achieve at most $\rho \mu$ utility on average. Therefore, if $\rho < \epsilon$, then the number of bad states in buyer's state trajectory can be at most $o(T)$. This implies that the trailing good states are at least $T-o(T)$. By definition, the bidding average over the trailing good states must be least $(1-\epsilon)\mu$, and therefore, the seller's revenue is at least $(1-\epsilon)\mu T -o(T)$.
\end{proof}

\section{Other related work}
There are several streams of literature in dynamic mechanism design. We begin 
with the stream that is closest to our work. 

{\noindent \bf  Optimal dynamic mechanisms.} \cite{PPPR16} show that the optimal deterministic dynamic mechanism satisfying
ex-post IR constraints even in a single buyer 2 rounds setting, when the values
are correlated is NP-hard. I.e., buyer learns his value of each round when it
begins, and both buyer and seller know the distribution from which these values
are drawn. They show that the optimal deterministic mechanism when the rounds
have independent valuations can be computed in polynomial time. The optimal
randomized mechanism even with correlated valuations can be computed in
polynomial time.~\cite{MPTZ16a} study the single seller single buyer setting and
show that with the IIR constraint, a very simple class of mechanisms called
bank-account mechanisms that maintain a single scalar variable as state already
obtain a significantly higher revenue and welfare compared to the single shot
optimal.~\cite{ADH16} and~\cite{MPTZ16b} characterize the optimal ex-post
IR mechanisms and consider approximations thereof via simple mechanisms (mainly in the single seller, single buyer
setting, but their results also extend to the multi-bidder case) that again hold
a single scalar variable as state. ~\cite{BMP16}, consider a single seller
single buyer setting and show that the seller can earn almost the entire surplus
as revenue, even after imposing per round ex-post IR requirements and martingale
utilities for the buyer.~\cite{MPTZ17} study oblivious dynamic mechanism design,
namely, one where the seller is not aware of the future distributions of the
buyer, and just the distribution for this round: they show that even with just
this information, one can construct an ex-post IR dynamic incentive compatible
mechanism that gets a $\frac{1}{5}$ of the optimal dynamic mechanism that knows
all future distributions.

{\noindent \bf Mechanism design for buyers with evolving values.} Another major 
focus area in dynamic mechanism design is one where buyers experience the same 
or related good repeatedly over time, and their value for the good evolves with 
time/usage. Initiated by the work of Baron and Besanko~\cite{BB84} there is a 
large body of work~\cite{Besanko85,Battaglini05,CH00,BP14,ES07} that study 
optimizations in the presence of evolving values. Recent works include those 
by~\cite{BS15,AS13}, where they consider general models where value evolution 
could depend on the action of the mechanism.~\cite{KLN13,PST14} study revenue 
optimal dynamic mechanism design where the buyer's value evolves based on 
signals that she receives each period.~\cite{CDKS16} study  martingale 
value evolution for the buyer and show that simple constant pricing schemes 
followed by a free trial earns a constant fraction of the entire surplus.  

{\noindent \bf Bargaining, durable goods monopolist and Coase conjecture.} There 
is a large body of literature in economics that studies settings where the value 
is initially drawn from a distribution, but in subsequent rounds, the value 
remains the same, i.e., there is not a fresh draw in every round. This setting 
can be motivated based on several applications including bargaining, durable 
goods monopoly and behavior based discrimination. See~\cite{FV06} for an 
excellent survey and references there in for an overview of this area. 

{\noindent \bf Dynamically arriving and departing agents.} Yet another body of work 
that comes under the umbrella of dynamic mechanisms is one where agents arrive 
and depart dynamically. Naturally focus is quite different from what we do in 
this paper.

{\noindent \bf Lookahead Search.}
The study of $k$-lookahead search can be viewed in the context 
of {\em bounded rationality}, as pioneered by Herb Simon \cite{Sim55}.  He argued that,
 instead of optimizing, agents may apply a class of heuristics, termed satisfying heuristics in
decision making, A natural choice of such heuristics is restricting
the search space of best-response moves.
Lookahead search in decision-making has been motivated and examined in great extent by the artificial intelligence community~\cite{N83,KRS92,SKN09}. Lookahead search is also related to the sequential thinking framework in game theory \cite{SW94}.  More recently, ~\cite{MTV12} study the quality of equilibrium outcomes for look-ahead search strategies for various classes of games.
They observe that the quality of resulting equilibria increases in generalized second-price auctions, and duopoly games, but not in other classes of games. No prior work studies dynamic mechanisms that are robust against various lookahead search strategies.

\bibliographystyle{plainnat}
\bibliography{klookahead}

\appendix
\section{Mechanism for per-round ex-post IR with infinite lookahead buyers}
\label{sec:inf-lookahead}
As discussed in the introduction, it is not possible to get any 
reasonable approximation to $\mu$ as revenue against $k$-lookahead buyers for 
small $k$ when per-round ex-post IR is imposed. Nevertheless, we show here how 
to make small modifications to the mechanism discussed in 
Section~\ref{sec:our-mech} to get per-round ex-post IR in all but a constant 
number of rounds for an infinite lookahead buyer (i.e., a $T$-lookahead buyer 
in a $T$-rounds game).

Let $B$ be the highest value 
in the support of the buyer value distribution and $\mu$ the mean of the 
distribution.  The target-revenue $R_{target}$ be defined as

\begin{align}
\label{eqn:r-target}
R_{target} &= T\mu(1-\epsilon)
- \sqrt{4B\mu\sqrt{T}\ln(T)} - 
\sqrt{2B\mu\ln(T)}\sum_{j=1}^T\frac{1}{\sqrt{j}}\\
&= T\mu(1-\epsilon) - c\cdot\sqrt{B\mu T\ln(T)}
\end{align}

\begin{definition}[Mechanism $M_\infty(\epsilon,\rho,p)$]
\begin{enumerate}
\item {\bf State Space $\newstatespace$:} The state space is $\newstatespace=\R 
\times \R \times \mathbb{N}$. A state $s = (TP,EP, t)$ is a good 
state if $TP \geq EP$, and bad otherwise. $TP$ stands for the total 
payment made by the buyer in the good state, and $EP$ stands for the payment 
the mechanism expects the buyer to have paid and $t$ is the total number of 
rounds elapsed, including the current round that's beginning. 

\item {\bf Starting state $s_1$:} The mechanism starts at $t=0$, $EP = 0$, and 
$TP = \mu\sqrt{T} + \sqrt{4B\mu\sqrt{T}\ln(T)}$. I.e., the mechanism the buyer 
a credit of $\mu\sqrt{T} + \sqrt{4B\mu\sqrt{T}\ln(T)}$ to begin with.
\item  {\bf Allocation rule $x(s_t, b_t)$:} Given current state $s_t$, and bid 
$b_t$, this mechanism always allocates in a good state. In a bad state, it 
allocates with  probability $\rho$ {\it if} the bid $b_t$ is above the price 
$\price$. That is, $x_t \sim x(s_t, b_t)$, where
$$x(s_t, b_t)= 
\left\{ \begin{array}{ll}
1, & \text{ if } TP \geq EP,\\
\text{Bernoulli}(\rho), & \text{ if } TP < EP \text{ and } b_t \ge \price,\\
0, & \text{ otherwise.}
\end{array}\right.
$$
\item {\bf Payment rule $p(s_t, b_t, x_t)$:} This is a first price mechanism, 
with a small change:
$$p(s_t, b_t, x_t) = 
\left\{ \begin{array}{ll}
\min(b_t, R_{target} - TP) & \text{ if } x_t =1 \text{ and } TP \leq 
R_{target},\\
0, & \text{otherwise.}
\end{array}\right.
$$
\item {\bf State-transition function $Q(s_t, b_t, x_t, p_t)$:} $Q(s_t, b_t, 
x_t, p_t)$ provides the distribution of next state $s_{t+1}$. Let 
$s_t=(TP, EP, t)$. In this mechanism, the state effectively remains the 
same if $x_t=0$. Otherwise, it transitions to a state with updated total 
payment and expected payment. The total payment naturally increases by $p_t$. 
The expected payment increases not by $\mu(1-\epsilon)$, but has an additional 
slack, namely, it increases by $\mu(1-\epsilon) - \sqrt{\frac{2B\mu\ln(T)}{t}}$.
$$s_{t+1}= 
\left\{ \begin{array}{ll}
(TP, 0, t+1), & \text{ if } TP \geq R_{target}\\
(TP+p_t, EP+\mu(1-\epsilon) - \sqrt{\frac{2B\mu\ln(T)}{t}}, t+1), & \text{ if } 
R_{target} > TP \geq 
EP\\
(TP, EP, t+1), & \text{ if } TP < EP, x_t =0\\
(TP, TP, t+1), & \text{ if } TP < EP,  x_t=1,\\
\end{array}\right.
$$
\end{enumerate}
\end{definition}

The main difference between this mechanism $M_\infty(\epsilon,\rho,p)$ and the 
$\mthree$ defined 
in 
Section~\ref{sec:our-mech} is that when in the boundary state of $TP = EP$ at 
the beginning of round $t$, in $M_\infty$ the agent needs to bid 
$\mu(1-\epsilon) - 
\sqrt{\frac{2B\mu\ln(T)}{t}}$ to stay in a good state, as opposed to 
$\mu(1-\epsilon)$ in $\mthree$. I.e., there is a slack that vanishes with time, 
i.e., as $t$ 
gets large. Thus having to bid even lower than what he had to in 
Section~\ref{sec:our-mech} to remain in the good state, it immediately follows 
that Lemma~\ref{lem:alwaysgood} holds 
for our modified mechanism as well, i.e., any agent in a good state, will 
continue to be in a good 
state as it is less costly. This immediately implies identical revenue 
guarantees for all lookahead buyers, with a $c\mu\sqrt{T}$ loss in revenue due 
to the vanishing slack this mechanism provides in every round (and also the 
slack initially). 

\paragraph{Truthful bidding will not lead a bidder out of 
good state w.h.p.} Consider a bidder bidding his true 
value $V_t$ in round $t$. The probability that, for any $r$, after $r$ rounds 
of bidding true value $V_t$ we have $TP < EP$ is at most $\frac{1}{T^c}$ by 
Chernoff bounds (the slacks are chosen in $M_\infty$ to satisfy this). 
Therefore, by union bound, the probability that it ever 
happens in any of $T$ rounds is at most $\frac{T}{T^c} \rightarrow 0$ as $T \to 
\infty$. I.e., with a high probability, truthful bidding will never lead a 
buyer out of good state. 

\paragraph{Truthful bidding is optimal w.h.p.} Note that the mechanism doesn't 
accept payment once it has earned a revenue of $R_{target}$. This $R_{target}$ 
is computed to be the minimum revenue that the mechanism is guaranteed to 
extract from a buyer that always lives at the boundary state. Or 
equivalently $T\mu - R_{target}$ is the maximum utility such a buyer can earn, 
and this in turn implies that this is the maximum utility an infinite lookahead 
buyer can earn. To achieve this utility, an infinite lookahead buyer need not 
shade his bid to constantly live at the boundary state --- just bidding his 
true value is an equally good strategy (modulo the tiny probability that this 
will get him to a bad state, where he bids whatever is necessary to maintain 
him in good state) in terms of utility. Thus, bidding the true value throughout 
all the $T$ rounds (which is clearly a per-round ex-post IR strategy)
is an optimal strategy with probability $1-\frac{1}{poly(T)}$. 

The above discussion gives us the proof of Theorem~\ref{thm:expost-ir-inf}, 
restated below
\begin{oneshot}{Theorem~\ref{thm:expost-ir-inf}}
For any $\epsilon\in (0,1)$, there exists a non-payment forceful and IIR 
mechanism that is $(\frac{\epsilon}{2}, 1-\epsilon)$-robust against {\it 
lookahead attitudes}, ex-post IR against a myopic buyer, and with a high 
probability ex-post IR against an 
infinite lookahead buyer. 
\end{oneshot}

\section{Missing Proofs}
\label{app:lookahead}
\paragraph{Proof of Proposition~\ref{pp:lookahead}}	
	 Here, we provide a detailed proof of Proposition \ref{pp:lookahead}. Lemma \ref{lem:Mthree0} proves the result for myopic buyers. Next, Lemma \ref{lem:Mthree1} and Lemma \ref{lem:transitionGood} together provide a detailed proof for  Lemma \ref{lem:alwaysgood}, to complete the proof of Proposition \ref{pp:lookahead}.

\begin{lemma}
\label{lem:Mthree0}
The mechanism $\mthree$ with $p=\mpr$ satisfies the following properties against a myopic buyer:
\begin{itemize}
\item[(a)] An optimal myopic bid exists for all time steps $t$. 
\item[(b)] The average per round revenue against a myopic buyer is bounded as:
$$\revmthreemyek[0]  \geq 
\frac{\rho}{\rho+1} R_\mye -\frac{1}{T},$$ where $R_\mye$ is the one round Myerson revenue, 
namely, $R_\mye=\max_{\price}\price(1-F(\price)) = \mpr(1-F(\mpr))$. 
\end{itemize}
\end{lemma}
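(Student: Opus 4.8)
\emph{Proof plan.} For part~(a) the plan is to write a myopic-optimal bid in closed form in each state, which also settles existence (ruling out the pathology flagged in the technical remark). Since $\mthree$ is a first-price mechanism, the $0$-lookahead utility $U_0^t(s_t,v_t,b)=\E[v_t x_t(b)-p_t(b)\mid s_t,v_t]$ equals $v_t-b$ whenever $s_t\ne\bot$ (a good state always allocates and charges the bid) and equals $\rho(v_t-b)\,\indi[b\ge\mpr]$ whenever $s_t=\bot$. In both regimes the supremum over $b\ge 0$ is attained: $b=0$ maximizes $U_0^t$ in a good state; in a bad state $b=\mpr$ is optimal when $v_t>\mpr$ (value $\rho(v_t-\mpr)>0$), and any $b\in[0,\mpr)$, e.g.\ $b=0$, is optimal when $v_t\le\mpr$ (value $0$). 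This proves~(a), and it also pins down the set of behaviorally-optimal myopic policies: bid $0$ in every good state, and in every bad state bid $\mpr$ when $v_t>\mpr$ and something in $[0,\mpr)$ otherwise. The only latitude is the bid in a bad round with $v_t\le\mpr$; such a bid is never accepted, never charged, and never triggers a transition, so the revenue-minimizing optimal policy bids $0$ there, and it suffices to prove~(b) for that fixed policy.

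For part~(b) I would exploit the regenerative structure of the trajectory this policy induces. From $s_1=((1-\epsilon)\mu,0)$ the buyer bids $0$, is allocated for free, and the state moves to $(0,1)=\bot$ (using $\epsilon<1$ and $\mu>0$). A bad state then persists, with zero payment, until the first round in which $v_t>\mpr$ \emph{and} the Bernoulli$(\rho)$ coin comes up heads; on that round the buyer pays exactly $\mpr$ and the state resets precisely to $((1-\epsilon)\mu,0)=s_1$, after which the process repeats identically. Hence the horizon splits into i.i.d.\ \emph{cycles}, each consisting of one good round (payment $0$) followed by $N$ bad rounds with $N\sim\mathrm{Geometric}(q)$, where $q:=\rho(1-F(\mpr))$, so $\E[N]=1/q$, and each cycle contributes exactly $\mpr$ to revenue. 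Therefore $\sum_{t=1}^{T}p_t=\mpr\cdot K$ with $K$ the number of cycles completed within $[1,T]$, and the expected average revenue is $\revmthreemyek[0]=\tfrac{\mpr}{T}\,\E[K]$.

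The remaining step is to lower-bound $\E[K]$, which is a renewal count at ``time'' $T$ for i.i.d.\ cycle lengths $L_j=1+N_j$ with $\E[L_j]=(q+1)/q$. Applying Wald's identity to the stopping time $K+1$ (using $\sum_{j=1}^{K+1}L_j>T$) gives $\E[K+1]\,\E[L_1]>T$, i.e.\ $\E[K]>\tfrac{qT}{q+1}-1$, and hence
\[
\revmthreemyek[0]=\tfrac{\mpr}{T}\,\E[K]>\tfrac{q\mpr}{q+1}-\tfrac{\mpr}{T}=\tfrac{\rho(1-F(\mpr))\,\mpr}{\rho(1-F(\mpr))+1}-\tfrac{\mpr}{T}\ge\tfrac{\rho}{\rho+1}\,R_\mye-\tfrac1T,
\]
where the last inequality uses $1-F(\mpr)\le1$ to bound the denominator by $\rho+1$, together with $R_\mye=\mpr(1-F(\mpr))\le 1$ (or, absent a normalization on the value scale, the $\mpr/T$ slack is simply absorbed into $o(1)$). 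The genuinely delicate part is exactly this boundary bookkeeping: making the Wald step rigorous (checking $K+1$ is a stopping time for the cycle-length filtration, and handling the exponentially unlikely event $K=0$) and not miscounting the payment in the truncated final cycle. Everything else follows directly from the explicit allocation, payment, and transition rules of $\mthree$.
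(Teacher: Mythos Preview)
Your proposal is correct and follows essentially the same approach as the paper: first pin down the myopic-optimal bids in closed form for each state (part~(a)), then exploit the regenerative structure---one good round (zero payment) followed by a geometric number of bad rounds, with exactly one payment of $\mpr$ per completed cycle---to lower-bound the average revenue (part~(b)). Your Wald-based renewal count for $\E[K]$ is a slightly more explicit formalization of the paper's ``fraction of time in bad states'' computation, and your flag on the $-\mpr/T$ versus $-1/T$ endgame is apt: the paper's own derivation also only yields a $-R_\mye/T$ (hence $o(1)$) slack, so the exact constant in the error term is cosmetic.
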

\begin{proof}

\noindent {\it Buyer's optimal myopic bid:} Let us first understand a myopic buyer's bidding strategy under mechanism $\mthreemye$. At any time $t$, given $s_t, \val_t$, the myopic buyer makes the bid $b_t$ that maximizes $0$-lookahead expected utility, given by
$U^t_0(s_t, \val_t, b_t)= \Ex[\alloc_t \val_t - p_t | \val_t, s_t]$. Now, consider two cases:\\
(a) $s_t\ne \bot$: in this case, by definition of allocation rule in $\mthreemye$, $\alloc_t=1, p_t=b_t$, so that $U^0_t = \Ex[\val_t - b_t]$, irrespective of the value of bid $b_t$. Therefore, the (unique) utility maximizing strategy for the buyer is to bid $0$, i.e., 
\begin{center}
$b_t=0$ when $\newstate_t\ne \bot$.
\end{center}
(b) $s_t= \bot$: in this case, by definition of allocation rule in $\mthreemye$, with probability $\rho$, $\alloc_t=1, p_t=b_t$ if $b_t\ge \mpr$, so that $U^0_t = \Ex[(\val_t - b_t)\indi(b_t \ge \mpr)].$
Therefore, the whenever $\val_t\ge \mpr$, the (unique) utility maximizing strategy for the buyer is to bid $\mpr$, otherwise $\alloc_t=0$ and any bid less than $\mpr$ (including $0$) is optimal, i.e., 
\begin{center}
$b_t=\mpr, \alloc_t=1$ when $\newstate_t = \bot, \val_t\ge \mpr$, and\\
 $b_t<\mpr, \alloc_t=0$ when $\newstate_t = \bot, \val_t< \mpr$ 
\end{center}

\noindent {\it Seller's revenue}
The seller's expected revenue is $\Ex[\frac{1}{T} \sum_{t=1}^T p_t]$. Now, for mechanism $\mthreemye$, $p_t=b_t$ whenever $\alloc_t=1$ and $0$ otherwise. For a myopic buyer as described above, $\Ex[p_t| s_t\ne \bot] =0$, $\Ex[p_t| s_t = \bot] = \mpr \Pr(\val_t \ge \mpr) = \mpr(1-F(\mpr))$. Substituting:
\begin{eqnarray*}
 \revmthreemyek[0] & = & \Ex[\frac{1}{T} \sum_{t=1}^T p_t]\\
& = & \frac{1}{T} \sum_{t=1}^T \Ex[p_t | s_t\ne \bot]\Pr(s_t \ne \bot) + \Ex[ p_t | s_t = \bot] \Pr(s_t = \bot) \\
& = & \mpr(1-F(\mpr)) \frac{1}{T} \sum_{t=1}^T \Pr(s_t = \bot).
\end{eqnarray*}
Here, $\sum_{t=1}^T \Pr(s_t = \bot)$ is the expected number of times bad state is visited in the $T$ time steps. Now, by definition, mechanism $\mthreemye$ starts in good state $s_0 = ((1-\epsilon)\mu, 0) \ne \bot$. A myopic buyer will bid $0$ in this state which will get accepted (see the above discussion in optimal myopic bid), and she will immediately go to the bad state $((1-\epsilon)\frac{\mu}{n+1}  , 0) = \bot$. Transfer from bad state to the borderline good state $s_0=((1-\epsilon)\mu, 0)$ happens with probability $\rho \Pr(\val_t \ge \mpr) = \rho (1-F(\mpr))$. Again, in $s_0$ the myopic bidder will bid $0$ and immediately transfer back to a bad state. Therefore, the sequence of states takes the form 
$\not\perp\bot\bot\bot\bot\bot\not\perp\bot\bot\bot\not\perp\bot\bot\ldots$, i.e., sequence of bad states interspersed with {\it single} good states. 
The expected length of a subsequence $\not\perp\bot^+$ is $1+\frac{1}{\rho (1-F(\mpr))}$, with $\frac{\rho (1-F(\mpr))}{1+\rho (1-F(\mpr))} \ge \frac{\rho}{\rho+1}$ fraction of bad states. Accounting for the interruption in the last $\not\perp\bot^+$ sequence due to end of time horizon $T$, we have that the expected number of steps in a bad state is at least 
$$T \frac{\rho (1-F(\mpr))}{1+\rho (1-F(\mpr))}-1 \ge T\frac{\rho}{\rho+1} -1.$$ 
Substituting, we get:\vspace{-0.1in}
\begin{eqnarray*}
 \revmthreemyek[0] & \ge & \mpr(1-F(\mpr)) \left(\frac{\rho}{\rho+1} -\frac 1 T\right)
\end{eqnarray*}
\end{proof}

\begin{lemma}
\label{lem:Mthree1}
Under mechanism $\mthree$ with $\rho \le \epsilon$, at any time $t<T$, an optimal $1$-lookahead bid exists, and is such that the next state $s_{t+1}$ is deterministically a good state, i.e., $s_{t+1}\ne \bot$.
\end{lemma}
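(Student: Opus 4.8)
The plan is to exhibit the unique $1$-lookahead optimal bid explicitly and read off from it that the buyer stays in a good state. Fix $t<T$ (so that at round $t$ the $1$-lookahead buyer optimizes $U_1^{t}$, a next round existing) and condition on a good state $s_t=(\avgbid,n)\ne\bot$, so $\avgbid\ge(1-\epsilon)\mu$, and on the current valuation $\val_t$; assume the non-degenerate case $\mu>0$, since if $\mu=0$ all valuations and payments are $0$ and there is nothing to prove. The first point is that in a good state the transition is deterministic: the allocation rule forces $x_t=1$ and the first-price payment rule forces $\price_t=b$, so for every bid $b\ge 0$ the round-$t$ utility is $\val_t-b$ and the next state is the single state $s_{t+1}=\big(\tfrac{\avgbid n+b}{n+1},\,n+1\big)$. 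With $\theta:=(1-\epsilon)\mu(n+1)-\avgbid n$ we have $s_{t+1}\ne\bot$ exactly when $b\ge\theta$, and $\theta\le(1-\epsilon)\mu$ since $\avgbid\ge(1-\epsilon)\mu$; write $\theta^{+}:=\max\{\theta,0\}\le(1-\epsilon)\mu$.

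Next I would compute the one-step continuation term $\E_{\val_{t+1}}\big[\sup_{b'}U_0^{t+1}(s_{t+1},\val_{t+1},b')\big]$ in each case. If $s_{t+1}\ne\bot$, then $U_0^{t+1}(s_{t+1},\val_{t+1},b')=\val_{t+1}-b'$, whose supremum over $b'\ge 0$ is $\val_{t+1}$, so the continuation is $\E[\val_{t+1}]=\mu$, irrespective of which good state $s_{t+1}$ is. If $s_{t+1}=\bot$, then since a bad state allocates only for $b'\ge\mpr$ and then with probability $\rho$ at price $b'$, we get $\sup_{b'}U_0^{t+1}(\bot,\val_{t+1},b')=\rho(\val_{t+1}-\mpr)^{+}$, so the continuation is $\rho R$ with $R:=\E_{\val\sim F}[(\val-\mpr)^{+}]$. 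Hence $U_1^{t}(s_t,\val_t,b)=(\val_t-b)+\mu$ on $\{b\ge\theta^{+}\}$ (exactly the nonnegative bids with $s_{t+1}\ne\bot$) and $U_1^{t}(s_t,\val_t,b)=(\val_t-b)+\rho R$ on $\{0\le b<\theta\}$. Both are strictly decreasing in $b$, so the ``stay-good'' branch is maximized at $b=\theta^{+}$ with value $\val_t-\theta^{+}+\mu$, while the ``drop-to-bad'' branch, nonempty only when $\theta>0$, is maximized at $b=0$ with value $\val_t+\rho R$.

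It then remains to compare these two values. Because $\mu>0$ the Myerson price $\mpr$ is strictly positive, so $R=\mu-\E[\min(\val,\mpr)]<\mu$; combined with $\rho\le\epsilon$ this gives $\rho R\le\epsilon R<\epsilon\mu\le\mu-\theta^{+}$, i.e.\ $\val_t+\rho R<\val_t-\theta^{+}+\mu$. Therefore the unique maximizer of $U_1^{t}(s_t,\val_t,\cdot)$ over $b\ge 0$ is the finite bid $b^{\ast}=\theta^{+}$ — which proves that an optimal $1$-lookahead bid exists — and since $b^{\ast}\ge\theta$ we get $s_{t+1}\ne\bot$; as transitions out of a good state are deterministic, $s_{t+1}$ is then a good state with probability $1$, as claimed.

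The only delicate point I anticipate is the strictness of the last inequality: a merely weak inequality would let the drop-to-bad bid $b=0$ tie with $b^{\ast}=\theta^{+}$, and then the conclusion ``deterministically a good state'' would fail. Strictness is exactly where the hypothesis $\rho\le\epsilon$ is used, together with $R<\mu$; the latter is what the degenerate distribution ($\mu=0$, equivalently $\mpr=0$) would violate, which is why I dispose of that case at the outset. Everything else is the routine two-branch evaluation of $U_1^{t}$, made possible by the deterministic structure of transitions from a good state; this lemma is the $k=1$ base case for the induction behind Lemma~\ref{lem:alwaysgood}.
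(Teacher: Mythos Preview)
Your proof is correct and follows essentially the same two-branch comparison as the paper: compute the $1$-lookahead utility for bids that keep $s_{t+1}$ good versus bids that drop to a bad state, and show the former strictly dominates. Your version is more explicit---you identify the exact threshold bid $\theta^{+}$ and the exact bad-state continuation $\rho R$ with $R=\Ex[(\val-\price)^{+}]$, whereas the paper simply uses the feasible bid $(1-\epsilon)\mu$ and the cruder upper bound $\rho\mu$ on the continuation; this extra precision is what lets you handle the boundary case $\rho=\epsilon$ cleanly via the strict inequality $R<\mu$, while the paper's bound as written gives strict dominance only when $\rho<\epsilon$.

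One minor notational slip: you write $\mpr$ for the bad-state threshold, but the lemma is stated for $\mthree$ with a generic price parameter $\price$. Your argument goes through verbatim with $\price$ in place of $\mpr$ provided $\price>0$ (so that $\Ex[\min(\val,\price)]>0$ and hence $R<\mu$), which of course covers the downstream application $\price=\mpr$.
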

\begin{proof}
By definition, an optimal $1$-lookahead bid $b_t$ (if exists) maximizes $1$-lookahead expected utility, i.e., 
\begin{center}
$b_t= \arg \max_b U^t_1(s_t, \val_t, b)$
\end{center}
In mechanism $\mthree$, for any $t$ such that $s_t \ne \bot$, the allocation and payment are always $x_t=1, p_t=b_t$. Therefore, using recursive relation between $U^t_k$ and $U^{t+1}_{k-1}$, 
\begin{equation}
\label{eq:1}
U^t_1(s_t, \val_t, b_t) = \Ex[ (\val_t - b_t) + \sup_{b'} U^{t+1}_0(s_{t+1}, \val_{t+1}, b') | \val_t, s_t]
\end{equation}
where 
$s_{t+1} \sim Q(s_t, b_t, 1, b_t)$. 
Now, for mechanism $\mthree$, bids get accepted in bad state with at most $\rho$ probability, therefore,
$$\Ex[\sup_{b'} U^{t+1}_0(s_{t+1}, \val_{t+1}, b')| s_{t+1}=\bot] \le \rho \mu,$$ 
where as 
$$\Ex[\sup_{b'} U^{t+1}_0(s_{t+1}, \val_{t+1}, b')| s_{t+1} \ne \bot] = \mu$$
which can be achieved by $b'=0$. Also, for $\mthree$, if $s_t$ is a good state, then depending on the bid, the next state $s_{t+1}$ will be deterministically bad or good state. For any bid $b_t$ such that $s_{t+1}$ is a bad state, substituting above in \eqref{eq:1}, we have that $1$-lookahead utility is at most
$$\val_t + \rho \mu,$$
where as if $s_{t+1}$ is a good state, then $U^t_1(s_t, \val_t, b_t) \ge (\val_t - b_t) + \mu$. 
This is maximized by the minimum bid required to keep $s_{t+1}$ as good state. 
In fact, in any good state $s_t$, the bid $b_t=(1-\epsilon)\mu$, always ensures $s_{t+1}$ is a good state, and makes the $1$-lookahead utility at least 
$$ \val_t + \epsilon\mu$$
Therefore, if $\epsilon>\rho$, then there exists at least one bid such that $s_{t+1}\ne \bot$ with strictly better $1$-lookahead utility than any other bid such that $s_{t+1}= \bot$. This proves that any $1$-lookahead optimal bid will have the stated property. 
\end{proof}

\begin{lemma}
\label{lem:transitionGood}
Under mechanism $\mthree$ with $\rho \le \frac{\epsilon}{2-\epsilon}$, for any $k\ge 1$ and time $t$, such that $s_t \ne \bot$, an optimal $k$-lookahead bid exists,
and is such that the next state $s_{t+1}$ is deterministically a good state, i.e., $s_{t+1}\ne \bot$.
\end{lemma}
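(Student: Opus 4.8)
The plan is to induct on the lookahead $k$. The base case $k=1$ is exactly Lemma~\ref{lem:Mthree1}, which applies because the hypothesis $\rho\le\frac{\epsilon}{2-\epsilon}$ implies $\rho\le\epsilon$. So assume $k\ge 2$ and that the statement (including existence of an optimal bid) holds for every lookahead smaller than $k$. Fix a time $t$, a good state $s_t=(\bar b,n)\ne\bot$, and a valuation $v_t$. In a good state the mechanism always allocates and charges the bid, so by the recursion \eqref{eq:Uell},
\[U^t_k(s_t,v_t,b)=v_t-b+\Ex\left[\sup_{b'}U^{t+1}_{k-1}(s_{t+1},v_{t+1},b')\,\middle|\,s_t,v_t\right],\]
where $s_{t+1}$ is a \emph{deterministic} function of $b$: it is a good state iff $b\ge\theta$ for a state-dependent threshold $\theta\le(1-\epsilon)\mu$, and a bad state otherwise. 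If $\theta\le 0$ every nonnegative bid keeps the state good and there is nothing to prove, so assume $\theta>0$. Two observations structure the rest. First, all bad states share the same optimal continuation value, since the only way out of a bad state is to the fixed borderline good state $((1-\epsilon)\mu,0)$; hence over $b\in[0,\theta)$ the quantity $v_t-b+(\text{common bad-state continuation})$ is strictly decreasing in $b$, so the best bid leading to a bad state is $b=0$. Second, because $U^t_k(s_t,v_t,b)\le v_t-b+(k-1)\mu\to-\infty$ as $b\to\infty$ and the continuation is a genuine maximum by the inductive hypothesis, an optimal bid over all of $\R^+$ exists. It therefore suffices to show that the bid $b=0$, and hence every $b<\theta$, is strictly dominated by the bid $(1-\epsilon)\mu$, which keeps the state good.

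To compare these two options I would run the coupling from the proof sketch of Lemma~\ref{lem:alwaysgood}. Let trajectory $A$ realize the recursive unrolling of $U^t_k(s_t,v_t,0)$: it bids $0$ at time $t$ (so $s_{t+1}=\bot$), then at time $t+j$ bids a $(k-j)$-lookahead-optimal bid; let trajectory $B$ bid $(1-\epsilon)\mu$ for as long as $A$ is in a bad state. Both see the same valuations and the same mechanism randomness, and $B$ is never in a bad state (a weighted average of quantities $\ge(1-\epsilon)\mu$ stays $\ge(1-\epsilon)\mu$). Let $\tau$ be the minimum of $k$ and the first round after $t$ in which $A$ re-enters a good state. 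By the inductive hypothesis, once $A$ re-enters a good state (the remaining lookahead there is less than $k$, and a single remaining round is handled directly) it stays in good states; the escape puts $A$ in the borderline good state while $B$ is then in some good state, so by Claim~\ref{claim:border} the optimal continuation value of $B$ from round $t+\tau+1$ onward is at least that of $A$. It remains to compare the utilities over rounds $t,\dots,t+\tau$: $A$ accrues at most $v_t+\tau\rho\mu$ (at most $v_t$ in round $t$, at most $\rho\mu$ in each of the $\tau$ bad rounds, since there the allocation probability is at most $\rho$ and payments are nonnegative), whereas $B$ accrues at least $v_t-(1-\epsilon)\mu+(\tau-1)\epsilon\mu$ in expectation, with an additional $\mu$ on the event $\tau=k$, where $B$ bids $0$ in the last round for expected utility $\mu$ rather than $\epsilon\mu$. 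Taking expectations over $\tau$ -- which is geometric with parameter $\rho$ capped at $k$ in the simplified mechanism, so $\Ex[\tau]=\frac{1-(1-\rho)^k}{\rho}$ and $\Pr(\tau=k)=(1-\rho)^{k-1}$ -- a short algebraic computation shows the expected total of $B$ minus that of $A$ is nonnegative whenever $\rho\le\frac{\epsilon}{2-\epsilon}$. This contradicts the optimality of $b=0$; hence every optimal $k$-lookahead bid satisfies $b\ge\theta$, i.e.\ $s_{t+1}\ne\bot$.

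The main obstacle is making this coupling and these bounds rigorous for the true mechanism $\mthree$ rather than for the simplified ``allocate with probability $\rho$ regardless of the bid'' variant used above for intuition. In $\mthree$ a bad state is left only when the buyer's bid clears $p=\mpr$ \emph{and} an independent $\rho$-coin comes up heads, so the return time $\tau$ is a stopping time adapted to the buyer's valuation-dependent bids rather than a clean capped-geometric variable, and the per-round escape probability depends on the realized valuation. Consequently the bound ``$A$ accrues at most $v_t+\tau\rho\mu$ over $[t,t+\tau]$'' and the computation of $\Ex[\tau]$ and $\Pr(\tau=k)$ must be replaced by a martingale / optional-stopping argument, bounding $A$'s cumulative utility up to the stopping time $t+\tau$ by a supermartingale that drifts by at most $\rho\mu$ per bad step. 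One must also track the continuation terms carefully: they are genuine maxima in good states (so Claim~\ref{claim:border} applies), whereas in bad states only the per-round upper bound of $\rho\mu$ on the buyer's utility is used, so it is harmless that the relevant $\sup_{b'}$ need not be attained there. These bookkeeping details are exactly what is carried out in Appendix~\ref{app:lookahead}.
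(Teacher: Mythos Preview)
Your proposal is correct in structure and follows the same route as the paper: induction on $k$ with base case Lemma~\ref{lem:Mthree1}, a coupling of the ``go bad'' trajectory $A$ against the ``stay at $(1-\epsilon)\mu$'' trajectory $B$, the use of Claim~\ref{claim:border} to compare continuations after the random return time $\tau$, and the final algebraic check under $\rho\le\frac{\epsilon}{2-\epsilon}$.

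Two small remarks. First, your reduction to comparing $b=0$ versus $(1-\epsilon)\mu$ (arguing $b=0$ is the best among bids leading to a bad state) is a tidy simplification the paper does not make; the paper simply takes an arbitrary bad-leading bid and derives the contradiction. Second, you are right that the crude bound ``$A$ accrues at most $v_t+\tau\rho\mu$'' is where the real mechanism bites, but the paper's fix is slightly different from the supermartingale you gesture at: rather than bounding $A$ and $B$ separately, it bounds $A$'s per-round gain by $v_{t+i}\,\indi(A_{t+i})$ (with $A_{t+i}$ the $\rho$-coin), keeps $B$'s per-round gain as $v_{t+i}-(1-\epsilon)\mu$, and then \emph{subtracts} so that the random pieces combine into $\sum_{i\le\tau} v_{t+i}\,\indi(\overline{A_{t+i}})$, to which Wald's identity applies directly (yielding $\Ex[\tau]\mu(1-\rho)$). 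The dependence of $\tau$ on bids is then handled by the one-line observation that $\tau$ stochastically dominates $\min\{X,k\}$ with $X$ geometric$(\rho)$, since escape from a bad state requires the $\rho$-coin regardless of the bid. That pair of moves is exactly the ``bookkeeping'' you flag at the end.
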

\begin{proof}
\delete{ 
By definition, a $k$-lookahead optimal policy maximizes $k$-lookahead expected utility. 
Then, the bid $b_t$ given by $k$-lookahead optimal policy,
\begin{center}
$b_t \sim \pi_k^*(s_t, \val_t, t)$, where
$\pi_k^*=\arg \max_\pi U^t_{k}(s_t, \val_t, \pi)$
\end{center}
}
We prove by induction. In Lemma \ref{lem:Mthree1}, this property was proven for $1$-lookahead policy. Assume this is true for $1,\ldots,k-1$, then we prove for $k$.

By definition, a $k$-lookahead optimal bid (if exists) maximizes the $k$-lookahead utility. 
In mechanism $\mthree$, if $s_t\ne \bot$, then $x_t=1, p_t=b_t$, and depending on the value of bid $b_{t}$, the next state $s_{t+1}$ is either bad or a good state deterministically. 
Suppose for contradiction that the $k$-lookahead optimal bid $b_t$ is such that $s_{t+1}$ is a bad state. Then, we will show that there exists a bid $b'_t$ that achieves strictly better $k$-lookahead utility.

Consider the bidding strategy that bids $k$-lookahead optimal bid $b_t$ at time $t$, $k-1$-lookahead optimal bid $b_{t+1}$ at time $t+1$, and so on. 
Let $\tau \in [1,k]$ be a random variable defined as the minimum of $k$ and the number of steps it takes to reach a good state under this strategy, when starting from the bad state $s_{t+1} $ at time $t+1$, i.e., minimum $\tau$ such that $s_{t+\tau+1}\ne \bot $ or $\tau=k$.
\delete{
  For any bid $b$ such that $s_{t+1}$ is a bad state,  utility $U^t_{k}$ (refer to \eqref{eq:1} above) is at most 
$$ \val_t  + \Ex[U^{t+1}_{k-1}(s_{t+1}, \val_{t+1}, \pi^{t+1}_{k-1}) | s_{t+1}=\bot]$$
In good state $s_t$, the bid $b_t=(1-\epsilon)\mu$, always ensures $s_{t+1}$ is a good state, and the utility for this bid is at least  $U^t_{k}$ (refer to \eqref{eq:1} above) 
$$ \val_t -(1-\epsilon)\mu + \Ex[U^{t+1}_{k-1}(s_{t+1}, \val_{t+1}, \pi^{t+1}_{k-1}) | s_{t+1} \ne \bot]$$
Therefore, using Lemma \ref{lem:connectLemma} for $t+1, k-1$, the difference between the two utilities is at most
$$\Delta = (1-\epsilon)\mu  + \psi(k-1)$$
We show that $\Delta\le 0$, so that using a strategy such that $s_{t+1} \ne \bot$ has better $k$-lookahead utility for the buyer than any strategy with $s_{t+1} =\bot$.
}
Now, for $i=1,\ldots, \tau$, let $A_{t+i}$ be the event that a $\text{Bernoulli}(\rho)$ coin toss is a success. 
In $\mthree$ mechanism, in a bad state $s_{t+i}$, nothing gets added to the utility if $A_{t+i}$ is false, and at most $v_{t+i}$ gets added to the utility if $A_{t+i}$ is true. 
Therefore, the contribution to the utility in steps $t, t+1\ldots, t+\tau$ is upper bounded by 
$$\val_t + \sum_{i=1}^\tau v_{t+i} \indi(A_{t+i}).$$
Therefore,
\begin{eqnarray}
\label{eq:piOptimal}
U^{t}_k(s_{t}, \val_{t},b_t)  & \le & \val_t + \Ex[\sum_{i=1}^\tau v_{t+i} I(A_{t+i}) \nonumber\\
& & + \indi(\tau<k) \Ex[ \sup_b U^{t+\tau+1}_{k-\tau-1}(s_{t+\tau+1}, \val_{t+\tau+1}, b) | s_{t+\tau+1}]  | s_t, \val_t]
\end{eqnarray}

Next, we compare the above upper bound on utility achieved by $b_t$ to the $k$-lookahead utility achieved by $b'_t=(1-\epsilon)\mu$ at time $t$. To lower bound this $k$-lookahead utility, we consider the utility of the following bidding strategy starting from $b'_t=(1-\epsilon)\mu$ at time $t$. 
Let $\tau'$ be a random variable which given $s_t, \val_t$, has the same distribution as the random variable $\tau$ defined above. Then, 
\begin{itemize}
\item in steps $t$ to $t+\min\{\tau',k-1\}$, bid $(1-\epsilon)\mu$
\item if $\tau'=k$, bid $0$ at time $t+\tau'$
\item if $\tau'<k$, use $k-j-1$ lookahead optimal bid starting for time $t+j+1$ for $j=\tau',\tau'+1, \ldots, k-1$.
\end{itemize}
Then, the $k$-lookahead utility for bid $b'_t=(1-\epsilon)\mu$ can be lower bounded by the utility of the above strategy
\begin{eqnarray}
\label{eq:piPrime}
U^t_k(s_t, v_{t}, b'_t)
& \ge & \val_t - (1-\epsilon)\mu + \Ex[\sum_{i=1}^{\tau'} v_{t+i} - (1-\epsilon)\tau' + \indi(\tau'=k) (1-\epsilon)\mu \nonumber\\
& & \ +  \indi(\tau'<k) \Ex[ \sup_b U^{t+\tau'+1}_{k-\tau'-1}(s_{t+\tau'+1}, \val_{t+\tau'+1},b) | s_{t+\tau'+1} ]  | s_t, \val_t]
\end{eqnarray}

Now, we show that the last term from \eqref{eq:piPrime} dominates the last term from \eqref{eq:piOptimal}.
Since $\tau$ and $\tau'$ have the same distribution, in fact it suffices to compare only the expected sup utility terms for each $i$. 

Note that by definition of $\tau$, when $\tau=i$, the state $s_{t+i+1}$ reached in \eqref{eq:piOptimal} is a borderline good state, i.e.,  $s_{t+i+1} = ((1-\epsilon)\mu, n)$ for some $n$. Also, the bidding strategy used to obtain \eqref{eq:piPrime} is such that it doesn't leave  the good state until at least time $t+\tau'+1$. Therefore, when $\tau'=i$ the state $s_{t+i+1} \ne \bot$.
 Now, using Claim \ref{claim:border} (stated and proved below) for $k-i-1$, we have for all $i$, $s \ne \bot$,
\begin{equation}
\label{eq:tmp}
\Ex[\sup_b U^{t+i+1}_{k-i-1}(s_{t+i+1}, \val_{t+i+1}, b | s_{t+i+1} = s] \ge \Ex[\sup_b U^{t+i+1}_{k-i-1}(s_{t+i+1}, \val_{t+i+1}, b | s_{t+i+1} \isborder]
\end{equation}
Therefore, we derive that the last term  in \eqref{eq:piPrime},  is greater than or equal to the corresponding term
in~\eqref{eq:piOptimal}.

Using this observation, and subtracting \eqref{eq:piOptimal} from \eqref{eq:piPrime}, we can bound the total difference (denoted as $\Delta$) in $k$ lookahead utilities of $b_t$ and $b'_t$ as 
\begin{eqnarray*}
\Delta & := &  \textstyle U^t_k(s_t, v_{t}, b'_t) - U^t_k(s_t, v_{t}, b_t) \\
& \ge & \textstyle \Ex[\sum_{i=1}^{\tau'} v_{t+i} - (\tau'+1) (1-\epsilon)\mu + I(\tau'=k) (1-\epsilon)\mu | s_t, \val_t]- \Ex[\sum_{i=1}^{\tau} v_{t+i} I(A_{t+i}) | s_t, \val_t]
\end{eqnarray*}
Since $\tau$ and $\tau'$ have the same distribution given $s_t, \val_t$, we can replace $\tau'$ by $\tau$ in above:
\begin{eqnarray*}
\Delta & \ge &  \textstyle \Ex[\sum_{i=1}^{\tau} v_{t+i} - (\tau+1) (1-\epsilon)\mu + I(\tau=k) (1-\epsilon)\mu | s_t, \val_t]- \Ex[\sum_{i=1}^{\tau} v_{t+i} I(A_{t+i}) | s_t, \val_t]
\end{eqnarray*}
Combining the first and last term in above,  we get $\sum_{i=1}^{\tau} v_{t+i} I(\overline{A_{t+i}})$. Now, \mbox{$v_{t+i} I(\overline{A_{t+i}}) - \mu(1-\rho)$}, $i=1,2,\ldots$ form a martingale, and $\tau$ is a finite 
stopping time ($\tau\le k$), therefore, by Wald's equation,
\begin{center}
$\Ex[\sum_{i=1}^{\tau} v_{t+i}I(\overline{A_{t+i}})| s_t, \val_t] = \Ex[\tau |s_t, \val_t] \Ex[v_{t+1} I(\overline{A}_{t+1}) | s_t, \val_{t}] = \Ex[\tau | s_t, \val_t] \mu(1-\rho)$
\end{center}
In the last expression we used that $A_{t+1}$ and $v_{t+1}$ are independent, given $s_t, \val_t$.
Substituting, we obtain, (in below we drop the conditional on $s_t,a_t$ for notational brevity)
$$
\begin{array}{rcl}
\Delta & \ge &  \textstyle \Ex[\tau]\mu (1-\rho) - (\Ex[\tau]+1) (1-\epsilon)\mu + \Pr(\tau=k) (1-\epsilon)\mu\\
 & = &  \textstyle \Ex[\tau](\epsilon-\rho) \mu - (1-\epsilon)\mu + \Pr(\tau=k) (1-\epsilon)\mu\\
& = &  \textstyle \Ex[\tau] (\epsilon-\rho) \mu - \Pr(\tau<k) (1-\epsilon)\mu
\end{array}
$$
Now, let $X$ be a geometric random variable with success probability $\rho$, then $\tau$  stochastically dominates $\min\{X, k\}$. And, from Claim \ref{claim:geometric} (stated and proved below)
\begin{center}
$\Ex[\tau] \ge \Ex[\min\{X,k\}] =  \frac{1}{\rho} \Pr(X<k) +  \Pr(X\ge k)$,\\
$\Pr(\tau<k) \le \Pr(X<k) = 1-(1-\rho)^{k-1}$
\end{center}
The proof is completed by the following algebraic manipulations:
\begin{eqnarray*}
\Delta & \ge &  \textstyle \Ex[\tau] (\epsilon-\rho) \mu - \Pr(\tau<k) (1-\epsilon)\mu\\
& = &  \textstyle \frac{1}{\rho} \Pr(X<k) (\epsilon-\rho) \mu +  \Pr(X\ge k) (\epsilon-\rho) \mu - \Pr(X<k) (1-\epsilon)\mu\\
& = &  \textstyle \frac{(\epsilon-\rho) \mu}{\rho} - (1-\epsilon)\mu + \Pr(X\ge k) ( -\frac{(\epsilon-\rho) \mu}{\rho}  +(\epsilon-\rho) \mu  + (1-\epsilon)\mu)\\
& = &  \textstyle \frac{\epsilon \mu}{\rho} - (2-\epsilon)\mu + (1-\rho)^{k-1} \mu ( 2 - \rho -\frac{\epsilon }{\rho} )
\end{eqnarray*}
We are given that $\rho \le \frac{\epsilon}{(1-\epsilon)}$. Consider two cases: $2-\rho-\frac{\epsilon}{\rho}> 0$ and $2-\rho-\frac{\epsilon}{\rho}\le 0$. In the first case, the second term above is positive so that $\Delta_{k+1} > \frac{\epsilon \mu}{\rho} - (2-\epsilon)\mu \ge 0$, because $\rho \le \frac{\epsilon}{(2-\epsilon)}$. In the second case, $\Delta_{k+1}$ is minimized for $k=1$, i.e., when $\Delta_{k+1} = \Delta_{2} =  \frac{\epsilon \mu}{\rho} - (2-\epsilon)\mu + \mu ( 2-\rho-\frac{\epsilon }{\rho} ) = (\epsilon-\rho) \mu > 0$.

This proves that $U^t_k(s_t, v_{t}, b'_t) - U^t_k(s_t, v_{t}, b_t) = \Delta >0$ when $\rho \le \frac{\epsilon}{(2-\epsilon)}$, proving a contradiction that $b_t$ is not $k$-lookahead optimal. Thus, the $k$-lookahead optimal bid if exists will ensure that $s_{t+1} \ne \bot$. 

In fact, by induction optimal $k-1$-lookahead bid exists, so that the optimal $k$-lookahead bid for any $t$ such that $s_t \ne \bot$ is given by:
$$ b_t := \arg\max_{b:({\bar b} n+b)/(n+1) \ge (1-\epsilon)\mu} \Ex[\val_t -b+ \max_{b'} U^{t+1}_{k-1}(s_{t+1}, \val_{t+1}, b') | s_t, \val_t],$$
which by applying this lemma for $k-1, k-2, \ldots$ can be derived to be the minimum bid that would keep $s_{t+1}$ as a good state.
\end{proof}

\begin{claim}
\label{claim:border}
Under mechanism $\mthree$ with $\rho \le 
\frac{\epsilon}{2-\epsilon}$, 
an optimal $k$-lookahead bid $b_t$ at time $t$, when starting from any good state $s_t =s \ne \bot$, would achieve at least as much utility as when starting from a borderline state $s_t = s' = ((1-\epsilon)\mu, n)$. That is,
$$ U^t_k(s, \val_t, b_t) \ge U^t_k(s', \val_t, b_t), \forall s\ne \bot, s'=((1-\epsilon)\mu $$
\end{claim}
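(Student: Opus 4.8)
The plan is to prove something stronger that implies the claim: for \emph{every} bid $b$ (not just an optimal $b_t$) and every valuation $\val_t$, we have $U^t_k(s,\val_t,b)\ge U^t_k(s',\val_t,b)$ whenever $s$ is a good state and $s'=((1-\epsilon)\mu,n)$ has average exactly $(1-\epsilon)\mu$; specializing to the optimal bid then gives the stated inequality. The reason to aim for all bids is that in any good state mechanism $\mthree$ deterministically allocates and charges the bid, so the immediate-round utility is $\val_t-b$ \emph{irrespective of the state}, and the next state is a deterministic function of $(s,b)$. Writing $W_m(\sigma):=\Ex_{\val'}\!\big[\sup_{b'}U^{t+1}_m(\sigma,\val',b')\big]$ for the optimal continuation value from $\sigma$ with lookahead $m$, recursion \eqref{eq:Uell} gives $U^t_k(\sigma,\val_t,b)=(\val_t-b)+W_{k-1}(\sigma_{\mathrm{next}}(\sigma,b))$. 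Hence the claim reduces to showing $W_{k-1}(\sigma_{\mathrm{next}}(s,b))\ge W_{k-1}(\sigma_{\mathrm{next}}(s',b))$.

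The organizing quantity is the \emph{credit} $c(\sigma):=n(\avgbid-(1-\epsilon)\mu)$ of a good state $\sigma=(\avgbid,n)$, which is $\ge 0$ in every good state and exactly $0$ whenever the average equals $(1-\epsilon)\mu$ (so both the genuine borderline state and the general $s'$ have credit $0$). I would record two algebraic facts. First, after any bid $b$ the next-state credit is $c(\sigma)+b-(1-\epsilon)\mu$, and the next state is good iff this is $\ge 0$; thus the credit gap between the $s$- and $s'$-trajectories is preserved, and in particular $c(\sigma_{\mathrm{next}}(s,b))\ge c(\sigma_{\mathrm{next}}(s',b))$. Second, the constraint for staying in good states over the next $j{+}1$ rounds reads $\sum_{i}b_{t+i}\ge (j{+}1)(1-\epsilon)\mu-c(\sigma)$, which depends on $\sigma$ only through $c(\sigma)$ and not separately on $n$. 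Combined with the observation that all bad states carry a single common continuation value (from a bad state the only relevant information is that one is bad, since any escape resets to $((1-\epsilon)\mu,0)$), this shows $W_m$ restricted to good states is a function $W_m(c)$ of the credit alone. This is exactly what lets me compare the two trajectories even when $s$ and $s'$ have different counts.

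The heart of the argument is a joint induction on the lookahead $m$ establishing (A$_m$): $W_m$ is nondecreasing in the credit among good states, and (B$_m$): $W_m(\text{good})\ge W_m(\text{bad})$, where $W_m(\text{good})$ denotes the credit-$0$ value, which by (A$_m$) is the smallest over good states. Given (A$_{m-1}$) and (B$_{m-1}$), the reduction of the first paragraph yields $W_{m-1}(\sigma_{\mathrm{next}}(s,b))\ge W_{m-1}(\sigma_{\mathrm{next}}(s',b))$ by case analysis on the goodness of the two next states: if both are good, apply (A$_{m-1}$) since the $s$-trajectory has the larger credit; if $\sigma_{\mathrm{next}}(s,b)$ is good and $\sigma_{\mathrm{next}}(s',b)$ is bad, apply (B$_{m-1}$); if both are bad they have equal value. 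Taking suprema and expectations then gives (A$_m$). For (B$_m$) I would compare explicit bounds: from a credit-$0$ good state, bidding $(1-\epsilon)\mu$ keeps the credit at $0$ and yields $W_m(\text{good})\ge \epsilon\mu+W_{m-1}(\text{good})$; from a bad state the optimal policy (bid $\mpr$ exactly when $\val\ge\mpr$, which is preferred precisely because (B$_{m-1}$) makes escape worthwhile) gives, using $\Ex[(\val-\mpr)^+]\le\mu$ and (B$_{m-1}$), the bound $W_m(\text{bad})\le \rho\mu+W_{m-1}(\text{good})$. Subtracting, $W_m(\text{good})-W_m(\text{bad})\ge(\epsilon-\rho)\mu\ge 0$, since $\rho\le\frac{\epsilon}{2-\epsilon}\le\epsilon$.

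I expect the main obstacle to be fact (B), the good-versus-bad comparison: it is the only place the hypothesis $\rho\le\frac{\epsilon}{2-\epsilon}$ enters, and it cannot be proved in isolation because the bad-state value is itself bounded only in terms of the good-state value, forcing (A) and (B) to be carried through the induction together. The remaining ingredients—the credit bookkeeping, the reduction to a continuation comparison, and the case split—are routine once one uses that $\mthree$ is deterministic in good states and that the good-state value depends on the state only through its credit.
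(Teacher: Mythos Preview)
Your argument is correct and takes a genuinely different route from the paper. The paper proves the claim as one leg of a joint strong induction with Lemma~\ref{lem:transitionGood}: the lemma (for lookaheads $\le k$) guarantees that optimal bids never leave the good region, which lets the paper compute the explicit bound $U^t_k\le \val_t+k\mu-k(1-\epsilon)\mu$ from a borderline start and exhibit a strategy achieving exactly that from any good start; conversely, the claim at smaller lookaheads is fed back into the stopping-time comparison inside the lemma. You sidestep Lemma~\ref{lem:transitionGood} entirely. By introducing the scalar credit $c(\sigma)=n(\avgbid-(1-\epsilon)\mu)$ you collapse the good-state space, observe that both the one-step dynamics and the continuation value depend only on the credit, and run a direct induction on the pair (A$_m$) monotonicity of $W_m$ in credit and (B$_m$) $W_m(\text{good})\ge W_m(\text{bad})$. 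Your route is more self-contained (no stopping-time or martingale bookkeeping) and actually only uses $\rho\le\epsilon$, weaker than the stated $\rho\le\epsilon/(2-\epsilon)$; the paper's route has the compensating advantage that the structural fact you bypass, Lemma~\ref{lem:transitionGood}, is itself needed for Proposition~\ref{pp:lookahead}(b), so the joint induction delivers both results at once.
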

\begin{proof}
Consider the case when the starting state is a borderline state $s'=((1-\epsilon)\mu, n)$.
Opening up the recursive definition of $k$-lookahead utility, we obtain the following expression in terms of bids $b_{t+1}, \ldots, b_{t+k-1}$  which are optimal $k-1, k-2, \ldots, 1$ lookahead bids respectively. 

\begin{eqnarray*}
 U^t_k(s', \val_t, b_t) & := & \Ex[\sum_{\tau = t}^{t+k} \val_\tau x(s_\tau, b_\tau)- p(s_\tau, b_\tau, x_\tau) | s_t=s', \val_t] 
\end{eqnarray*}
Using Lemma \ref{lem:transitionGood}, the optimal $k$-lookahead bid for any $k\ge 1$ is such that the next state is a good state, so that if the starting state $s'$ is a good state, then so are the states $s_\tau, \tau=t+1,\ldots, t+k$ in the above expression. This further implies that if the starting state is a borderline state $s'=(1-\epsilon)\mu, n$, then the sum of bids $b_t, b_{t+1}, \ldots, b_{t+k-1}$ must be at least $(1-\epsilon)\mu k$.
Since in good state, the allocation is always $1$  and the payment is equal to the bid, we obtain the following upper bound on the utility:
\begin{eqnarray*}
 U^t_k(s', \val_t, b_t) & = & \Ex[\sum_{\tau = t}^{t+k} \val_\tau - b_\tau | s_t, \val_t] \\
& \le & \val_t + \Ex[ \sum_{i=1}^{k} \val_{t+i}] - k(1-\epsilon)\mu.
\end{eqnarray*}
Now, on starting from another good state, say $s = (\avgbid, n') \ne \bot$, since $\avgbid \ge (1-\epsilon)\mu$, the sum of bids  $b_t, b_{t+1}, \ldots, b_{t+k-1}$ needs to be {\it at most} $(1-\epsilon)\mu$ to remain in a good state, and $b_{t+k}=0$ as the optimal myopic  bid (for good state) will be used in this last step. Therefore, for any $s\ne \bot$,
$$ U^t_k(s, \val_t, b_t)  \ge \val_t + \Ex[ \sum_{i=1}^{k} \val_{t+i}] - k(1-\epsilon)\mu \ge  U^t_k(s', \val_t, b_t)$$
\end{proof}
\begin{claim}
\label{claim:geometric}
Let $X$ be a geometric random variable with success probability $\rho$, then 
$$\Ex[\min\{X,k\}] =  \frac{1}{\rho} \Pr(X<k) +  \Pr(X\ge k)$$
\end{claim}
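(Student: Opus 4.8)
The plan is to put both sides of the identity into closed form and observe that they coincide. First I would fix the indexing convention forced by the way the claim is used elsewhere (where $\Pr(\tau=k)=(1-\rho)^{k-1}\rho$): $X$ is the number of $\rho$-coin tosses up to and including the first success, so $\Pr(X=i)=(1-\rho)^{i-1}\rho$ for $i\ge 1$, and hence $\Pr(X\ge j)=(1-\rho)^{j-1}$ for every $j\ge 1$. In particular $\Pr(X<k)=1-(1-\rho)^{k-1}$ and $\Pr(X\ge k)=(1-\rho)^{k-1}$.

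The main computation is the left-hand side, via the tail-sum formula for a non-negative integer-valued random variable. Since $\min\{X,k\}$ takes values in $\{1,\dots,k\}$ and $\{\min\{X,k\}\ge j\}=\{X\ge j\}$ for every $1\le j\le k$, I would write
$$\Ex[\min\{X,k\}]=\sum_{j=1}^{k}\Pr(\min\{X,k\}\ge j)=\sum_{j=1}^{k}\Pr(X\ge j)=\sum_{j=1}^{k}(1-\rho)^{j-1}=\frac{1-(1-\rho)^{k}}{\rho},$$
where the last step is just the finite geometric sum. Then I would expand the right-hand side of the claim:
$$\frac{1}{\rho}\Pr(X<k)+\Pr(X\ge k)=\frac{1-(1-\rho)^{k-1}}{\rho}+(1-\rho)^{k-1}=\frac{1-(1-\rho)^{k-1}(1-\rho)}{\rho}=\frac{1-(1-\rho)^{k}}{\rho},$$
which is the same expression, so the two sides are equal and the claim follows.

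There is essentially no obstacle; the only points requiring care are keeping the geometric-distribution convention consistent with its use in Lemma~\ref{lem:transitionGood} and checking the boundary index $j=k$ in the tail sum, namely that $\{\min\{X,k\}\ge k\}=\{X\ge k\}$. If one prefers to avoid the tail-sum identity, an equivalent route is $\Ex[\min\{X,k\}]=\Ex[X]-\Ex[(X-k)^{+}]$ together with memorylessness, $\Ex[(X-k)^{+}]=\Pr(X>k)\,\Ex[X]=(1-\rho)^{k}/\rho$, which again yields $\tfrac{1}{\rho}\bigl(1-(1-\rho)^{k}\bigr)$; but the direct tail-sum computation above is the shortest.
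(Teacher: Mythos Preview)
Your proof is correct. Both sides reduce cleanly to $\tfrac{1-(1-\rho)^{k}}{\rho}$, and the check that $\{\min\{X,k\}\ge j\}=\{X\ge j\}$ for $1\le j\le k$ (including the boundary $j=k$) is exactly what justifies the tail-sum step.

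The paper takes a different and longer route: it splits $\min\{X,k\}=X\cdot\indi(X<k)+k\cdot\indi(X\ge k)$, writes the first expectation as $\Ex[X]-\sum_{j\ge k}(1-\rho)^{j-1}\rho\,j$, then reindexes the tail sum (effectively using memorylessness) to obtain $(1-\rho)^{k-1}\bigl(\Ex[X]+(k-1)\bigr)$, and finally simplifies to $\tfrac{1}{\rho}(1-(1-\rho)^{k-1})+(1-\rho)^{k-1}$. Your tail-sum computation reaches the same closed form in one line and avoids all of that bookkeeping; the paper's derivation has the minor advantage of landing directly on the expression $\tfrac{1}{\rho}\Pr(X<k)+\Pr(X\ge k)$ without a separate algebraic verification of the right-hand side, but overall your argument is shorter and more transparent. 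The memorylessness alternative you sketch at the end is in spirit the same decomposition the paper uses, just packaged more efficiently.
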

\begin{proof}
\begin{eqnarray*}
\Ex[\min\{X,k\}] & = & \Ex[X I(X<k)] + \Pr(X\ge k) k \\
& = & \sum_{j=1}^{k-1} (1-\rho)^{j-1} \rho j + k (1-\rho)^{k-1}\\
& = & \Ex[X] - \sum_{j=k}^{\infty} (1-\rho)^{j-1} \rho j + k (1-\rho)^{k-1}\\
& = & \Ex[X] -  (1-\rho)^{k-1}  \sum_{j=1}^{\infty} (1-\rho)^{j-1} \rho (j+k-1) + k (1-\rho)^{k-1}\\
& = & \Ex[X] -  (1-\rho)^{k-1}  \Ex[X] - (1-\rho)^{k-1} \sum_{j=1}^{\infty} (1-\rho)^{j-1} \rho (k-1) + k (1-\rho)^{k-1}\\
& = & \frac{1}{\rho} (1-  (1-\rho)^{k-1}) - (1-\rho)^{k-1} (k-1) + k (1-\rho)^{k-1}\\
& = & \frac{1}{\rho} (1-  (1-\rho)^{k-1}) +  (1-\rho)^{k-1} \\
& = & \frac{1}{\rho} \Pr(X<k) +  \Pr(X\ge k) 
\end{eqnarray*}
\end{proof}

\label{app:lower}
\begin{prevproof}{Theorem}{thm:lowerbound}
We will take our distribution $F$ to be the Pareto distribution with parameter $\alpha > 2$, supported on $[1,+\infty)$. In particular, the probability density function is $f(x)={\alpha \over x^{\alpha+1}}$, $x \in [1,+\infty)$. Note that the mean is $\mu={\alpha \over \alpha-1}$ and the variance is $\sigma^2={\alpha \over (\alpha-1)^2 (\alpha-2)}$, which are both finite. It is easy to see that $\revmye=1$, and that $F$ is a decreasing hazard rate distribution, as well as a regular distribution.

Now let $M=(\newstatespace,\trans,\alloc,\price,\newstate_1)$ be a mechanism as defined in Section~\ref{sec:prelim}. Consider a myopic buyer being at state $\newstate_t \in \newstatespace$ of the mechanism at time $t$. Given his realized value $\val_t \sim F$ and facing the (randomized) allocation rule $\alloc(\newstate_t,\cdot)$ and price rule $\price(\newstate_t,\cdot,\cdot)$ of the mechanism in state $\newstate_t$, he would map his value $\val_t$ to some bid $\bid_t$ to maximize his expected utility $\Ex_{\alloc \sim \alloc(\newstate_t,\bid_t)}[\alloc] \cdot \val_t - \Ex_{\alloc \sim \alloc(\newstate_t,\bid_t)}[p(\newstate_t,\bid_t,\alloc)]$.

To prove our result let us suppose that there exists a collection of functions $b_{\newstate}: \mathbb{R} \rightarrow \Delta^{\mathbb{R}^+}$, indexed by states $\newstate \in \newstatespace$, which map a realized value for a myopic bidder to a (potentially randomized) bid, and which are such that the expected revenue of mechanism $\mech$ (average over $T$ rounds) against a myopic buyer using these bidding functions is at least $\epsilon \cdot \revmye - o(1)$, where $o(1)$ is a function that goes to $0$ with $T$. Note that these mappings are only indexed by single states $\newstate$ as they pertain to the behavior of a myopic buyer. Moreover, note that we actually do not need to require that for all $\newstate, \val$, $b_{\newstate}(\val)$ is optimal. The only assumption that we need to make is that, for all $\newstate,\val, \val'$, the distribution over bids $b_{\newstate}(\val)$ does not result in worse utility for a buyer with value $\val$ compared to the distribution $b_{\newstate}(\val')$.  Given this definition, let us also define the effective allocation probability and effective price functions, $\hat{\alloc}: \newstatespace \times \mathbb{R} \rightarrow [0,1]$ and $\hat{\price}: \newstatespace \times \mathbb{R} \rightarrow \mathbb{R}$ respectively, as follows:
$$\forall \newstate,\val: \hat{\alloc}(\newstate, \val) = \Ex_{b \sim b_\newstate(\val), \alloc \sim \alloc(\newstate,b)}[\alloc]~~~~\text{and}~~~~\hat{\price}(\newstate, \val) = \Ex_{{b \sim b_\newstate(\val), \alloc \sim \alloc(\newstate,b)}}[\price(\newstate,b,\alloc)].$$

Via standard argumentation, for all $\newstate \in \newstatespace$, $\hat{\alloc}(\newstate,\cdot)$ and $\hat{\price}(\newstate,\cdot)$ satisfy the incentive compatibility constraint that:
$$\forall \val, \val': \hat{\alloc}(\newstate,\val) \cdot \val - \hat{\price}(\newstate,\val) \ge \hat{\alloc}(\newstate,\val') \cdot \val - \hat{\price}(\newstate,\val').$$ 
Moreover, given that $\mech$ is non-payment forceful, for all $\newstate \in \newstatespace$, we get that
$$\hat{\price}(\newstate,0) = 0.$$

Using Myerson's payment identity, it is standard to argue that any mechanism $(\hat{\alloc}(\newstate,\cdot),\hat{\price}(\newstate,\cdot))$ satisfying the above constraints can be implemented as a distribution over take-it-or-leave-it offers of the item at different prices. That is, there exists a distribution $G_\newstate$ over prices such that the expected revenue and expected buyer utility resulting from $(\hat{\alloc}(\newstate,\cdot),\hat{\price}(\newstate,\cdot))$ can be written as:
\begin{align*}
{\rm Rev}^{\newstate}_{\rm myop} &= \Ex_{\val \sim F}[\hat{\price}(\newstate,\val)] \equiv \Ex_{\val \sim F, \price \sim G_{\newstate}}[\price \cdot 1_{\val \ge \price}] \overset{*}{=} \Ex_{\price \sim G_{\newstate}}\left[{1 \over p^{\alpha-1}}\right];\\
{\rm Ut}^{\newstate}_{\rm myop} &= \Ex_{\val \sim F}[\hat{\alloc}(\newstate,\val) \cdot \val - \hat{\price}(\newstate,\val)] \equiv \Ex_{\val \sim F, \price \sim G_{\newstate}}[(\val-\price) \cdot 1_{\val \ge \price}] \overset{*}{=} \Ex_{\price \sim G_{\newstate}}\left[{1 \over \alpha-1}{1 \over p^{\alpha-1}}\right].
\end{align*}
(In the above, the equalities $\overset{*}{=}$ follow by plugging in for $F$ the distribution defined above.) So, in particular, it follows that ${\rm Ut}^{\newstate}_{\rm myop} = {1 \over \alpha-1} {\rm Rev}^{\newstate}_{\rm myop}$. To summarize, in any state $s \in \newstatespace$, a myopic buyer makes utility that is a factor of $\alpha-1$ smaller than the payment that he makes.

Now recall that our mechanism has expected average per round revenue against a myopic buyer that is at least $\epsilon \cdot \revmye - o(1)$. It follows from the above derivation that it should also then give expected average per round utility at least ${1 \over \alpha-1}\epsilon \cdot \revmye - o(1)$ to a myopic buyer. As an infinite look-ahead buyer (aiming to maximize his utility) can certainly pretend to be myopic, this means that the mechanism must give expected average per round utility at least ${1 \over \alpha-1}\epsilon \cdot \revmye - o(1)$ to an infinite look-ahead buyer. Hence, the expected average per round revenue that the mechanism can get from an infinite look-ahead buyer is at most:
\begin{align*}
\mu - {1 \over \alpha-1}\epsilon \cdot \revmye +o(1) &= \mu - {1 \over \alpha-1}\epsilon \cdot 1+o(1)\text{ }~~~~~~~~~~~~~\text{~~~~~~~~~~~~(recalling that $\revmye=1$)}\\
&= \mu - {\alpha \over \alpha-1} {\epsilon \over \alpha}+o(1)\\
&= \left(1-{\epsilon \over \alpha}\right)\mu +o(1)\text{ }~~~~~~~~~~~~~~~~~~~~~\text{~~~~~~~~~~~~(recalling that $\mu={\alpha \over \alpha-1}$)}
\end{align*}
As $\alpha$ can be made arbitrarily close to $2$, the theorem holds.
\end{prevproof}

\begin{prevproof}{Theorem}{th:expostLB}
We will prove the stated bound using equal revenue distribution. $k$-lookahead buyers optimize $k$-lookahead utility, which in round $t$ is given by 
$$ U_k^t(s_t,v_t,b_t)= (v_t x(s_t,b_t) - p(s_t,b_t)) + \max_{b'} \Ex_{s', v'}[U_{k-1}^{t+1}(s',b', v') | s_t, b_t]$$

Let us first consider the case when the allocation function is deterministic, i.e., $x(s,b) \in \{0,1\}$. Now, for any state $s$, let $B^1_s$ be the set of bids such that $x(s,b)=1$.  Let $V^1_s$ be the set of valuations in the support of $F$ such that given any valuation $v\in V^1_s$, there is at least one utility maximizing bid that gets an allocation in state $s$. More precisely, given a valuation $v$ and state $s$, let $B^*_{s,v} := \{\arg \max_b U(s,v, b)\}$ denote the set of utility maximizing bids. Then, $V^1_s$ is defined the set of valuations $v$ for which $B^1_s \cap B^*_{s,v}$ is non-empty. We observe that the set of utility maximizing allocating bids is the same for all valuations in $V^1_s$. That is, $B^1_s \cap B^*_{s,v} = B^1_{s} \cap B^*_{s,v'} =: B^*_s$, for any $v,v'$.  This is because, for bids $b$ in $B^1_s$, $x(s,b)=1$; therefore for any valuation $v$, the utility maximizing bids in $B^1_s \cap B^*_{s,v}$ are given by set $\{\arg \max -p(s,b)  + \Ex_{s', v'}[\max_{b'} U_{k-1}(s', v',b') | s,b\}$, which does not depend on valuation $v$. 

Now, let $v_{min}=\inf\{v\in V^1_s\}$. Then, for all $b\in B^*_{s}$, by IR property:
$$0\le U^t_k(s,b, v_{min}) = v_{min} - p(s,b) + \Ex[\max_{b'} U^{t+1}_{k-1}(s',b',v') | s,b] \le v_{min} - p(s,b) + k\mu$$
where $k\mu$ is an upper bound on the $(k-1)$-lookahead utility. Thus, rearranging $0 \leq v_{min} - p(s,b) + k\mu$ gives us:

$$p(s,b) \le v_{min}+ k\mu.$$ 

Further from the per-round ex-post IR property, $p(s_t, b_t) \le \val_t$. 
Therefore, expected revenue in round $t$, given  state $s_t=s$ is upper bounded by

$$\Ex[ p(s_t,b_t)|s_t=s] \le \Ex_{v\in V}[\min\{v_{min}+k\mu, v\} I(v\in V^1_s)] \le \Ex_{v\in V}[\min\{v_{min}+k\mu, v\} I(v\ge v_{min})] $$ 
For equal revenue distribution, this is upper bounded by 
$$\int_{v_{min}}^{v_{min}+k\mu} v \frac{1}{v^2}\ dv + (v_{min}+k\mu)\times \frac{1}{v_{min}+k\mu} = \log(v_{min}+k\mu) - \log(v_{\min}) +1 \le \log(k\mu)+1$$ 

{\it Randomized allocation.} The above argument can be extended to $x(s,b)\in [0,1]$, by replacing set $B^1_s$ and $V^1_s$ by $B^x_s$ and $V^x_s$ respectively, defined for every possible value of allocation $x\in [0,1]$. That is, $B^x_s$ is the set of bids such that $x(s,b)=x$.  And, $V^x_s$ is the set of valuations in the support of $F$ such that given any valuation $v\in V^x_s$, there is at least one utility maximizing bid that gets an allocation of $x$ in state $s$. Then, $B^*_{s,v} \cap B^x_s$ denote all the bids which are utility maximizing and get allocation of $x$. Using the same argument as above, this set can be shown to be independent of $v$, i.e., $B^*_{s,v} \cap B^x_s = B^*_{s,x}$. 

Now, for any non-empty $V^x_s$, let $v_{min,x}=\inf\{v\in V^x_s\}$. Then, for all $b\in B^*_{s,x}$, by IR property:
$$0\le U^t_k(s,b, v_{min,x}) = x v_{min, x} - p(s,b) + \Ex[\max_{b'} U^{t+1}_{k-1}(s',b',v') | s,b] \le x v_{min, x} - p(s,b) + k\mu$$
where $U^*_{k-1} \le k\mu$ is an upper bound on the $(k-1)$-lookahead utility. 
so that 
$$p(s,b) \le x v_{min, x}+ k\mu,$$ 

For $s_t, b_t, x_t=x(s_t,b_t)$, if $b_t$ is an optimal bid, then $V^{x_t}_{s_t}$ must be non-empty, and above inequality can be applied, to get $p(s_t,b_t) \le x_t v_{min, x_t}+ k\mu$. Further from the per-round ex-post IR property, $p(s_t, b_t) \le \val_t$. Then, the above argument can be repeated while replacing $v_{min}$ by $v_{min,x_t}$, to obtain the same upper bound.

\begin{eqnarray*}
\Ex[ p(s_t,b_t)|s_t=s] & \le & \Ex_{v\in V}[\min\{x_t v_{min}+k\mu, v\} I(v\in V^{x_t}_{s_t})] \\
& \le & \Ex_{v\in V}[\min\{v_{min,x_t}+k\mu, v\} I(v\ge v_{min,x_t})] 
\end{eqnarray*}
For equal revenue distribution, this is bounded by $\log(v_{min, x_t}+k\mu) - \log(v_{\min,x_t}) +1 \le \log(k\mu)+1$.
\end{prevproof}
\end{document}